\documentclass[journal]{settings/IEEEtran}
\IEEEoverridecommandlockouts
\usepackage[english]{babel}
\makeatletter
\renewcommand{\fnum@figure}{\textbf{\figurename~\thefigure}}
\makeatother

\usepackage{mathrsfs}
\usepackage{type1cm} % type1 computer modern font
\usepackage{graphicx} % advanced figures
\usepackage{xspace} % fix space in macros
\usepackage{balance} % to better equalize the last page
\usepackage{booktabs} % nicer tables
\usepackage{multirow} % multi rows for tables
\usepackage{subcaption} % subfloats
\usepackage{bold-extra} % bold + {small capital, italic}
\usepackage{algorithm}
\usepackage{algorithmic}
\usepackage{microtype} % compress text
\usepackage{siunitx} % \num for decimal grouping
\usepackage{xfrac} % nicer slanted fractions
\usepackage{amsmath}
\usepackage{amssymb}
\usepackage{mathtools}
\usepackage{amsfonts}
\usepackage{amsthm}
\usepackage[hyphens]{url} % handle long urls
\usepackage[bookmarks, pdftex, colorlinks=true]{hyperref} % clickable references
\usepackage{cleveref} % smart references
\usepackage{csquotes}
\usepackage{bm}
\usepackage{xspace}
\usepackage{newunicodechar}
\newunicodechar{⟓}{\ensuremath{\uplus}}
\usepackage{tikz}
\usetikzlibrary{positioning}
\usetikzlibrary{bayesnet}
\usetikzlibrary{arrows}
\usetikzlibrary{shapes}
\usetikzlibrary{fit}
\usepackage{tikz-cd}
\usetikzlibrary{calc, positioning, arrows.meta, shapes.geometric, fit, backgrounds}
\usepackage{pgf}
\usepackage{tikz-3dplot}
\usepackage{tcolorbox}
\usepackage{pgfplots}
\pgfplotsset{compat=1.18}
\usepackage[show]{chato-notes}
\usepackage{cite}

\newcommand{\spara}[1]{\smallskip\noindent\textbf{#1}}

\newcounter{squishenumcounter}
\newenvironment{squishenum}
{\begin{list}{\arabic{squishenumcounter}.}
    {\usecounter{squishenumcounter}
     \setlength{\itemsep}{2pt}
     \setlength{\parsep}{1pt}
     \setlength{\topsep}{2pt}
     \setlength{\partopsep}{0pt}
     \setlength{\leftmargin}{2em}
     \setlength{\labelwidth}{1em}
     \setlength{\labelsep}{0.5em} } }
{\end{list}}

\usepackage[dvipsnames]{xcolor}
\hypersetup{
   bookmarks, pdftex,
   colorlinks=true,
   pagebackref=true, backref=page,
   linkcolor={red!50!black},
   filecolor={green!50!black},
   citecolor={green!50!black}, 
   urlcolor={blue!80!black},
}

\crefname{theorem}{Thm.}{Thms.}
\crefname{proposition}{Prop.}{Props.}
\crefname{lemma}{lem.}{lems.}
\crefname{corollary}{Cor.}{Cors.}
\crefname{definition}{Def.}{Defs.}
\crefname{section}{Sec.}{Secs.}
\crefname{figure}{Fig.}{Figs.}
\crefname{problem}{Prob.}{Probs.}
\crefname{appendix}{App.}{Apps.}
\crefname{equation}{Eq.}{Eqs.}
\crefname{algorithm}{Alg.}{Algs.}

\graphicspath{{fig}}

\theoremstyle{plain}
\newtheorem{theorem}{Theorem}

\newtheorem{lemma}{Lemma}

\newtheorem{definition}{Definition}

\newtheorem{remark}{Remark}

\DeclareMathOperator*{\argmin}{arg\,min}

\newcommand{\reall}{\ensuremath{\mathbb{R}}\xspace}

\newcommand{\edgeset}{\ensuremath{\mathcal{E}}\xspace}

\newcommand{\zeros}{\ensuremath{\boldsymbol{0}}\xspace}

\newcommand{\eye}[1]{\ensuremath{\mathbf{I}_{#1}}\xspace}
\newcommand{\stiefel}[2]{\ensuremath{\mathrm{St}({#1},{#2})}\xspace}
\newcommand{\ort}[1]{\ensuremath{\mathrm{O}({#1})}\xspace}

\newcommand{\vertexset}{\ensuremath{\mathcal{V}}\xspace}

\newcommand{\w}{\ensuremath{\mathbf{w}}\xspace}
\newcommand{\e}{\ensuremath{\boldsymbol{\varepsilon}}\xspace}
\newcommand{\x}{\ensuremath{\mathbf{x}}\xspace}
\newcommand{\y}{\ensuremath{\mathbf{y}}\xspace}
\newcommand{\X}{\ensuremath{\mathbf{X}}\xspace}
\newcommand{\z}{\ensuremath{\mathbf{z}}\xspace}
\newcommand{\Lapl}{\ensuremath{\mathbf{L}}\xspace}

\newcommand{\U}{\ensuremath{\mathbf{U}}\xspace}
\newcommand{\V}{\ensuremath{\mathbf{V}}\xspace}
\newcommand{\Z}{\ensuremath{\mathbf{Z}}\xspace}
\newcommand{\Omatrix}{\ensuremath{\mathbf{O}}\xspace}

\newcommand{\myP}{\ensuremath{\mathbb{P}}\xspace}
\newcommand{\myB}{\ensuremath{\mathbb{B}}\xspace}

\definecolor{mypurple}{RGB}{254, 68, 218}

\newcommand{\connection}{\ensuremath{\mathbb{G}}\xspace}
\newcommand{\Kronecker}{\ensuremath{\mathbb{K}}\xspace}

\newcommand{\connectionL}{\ensuremath{\mathbb{L}}\xspace}
\newcommand{\bdO}{\ensuremath{\mathbb{O}}\xspace}

\newcommand{\graph}{\ensuremath{G}\xspace}

\newcommand{\specialO}[1]{\ensuremath{\mathrm{SO}(#1)}\xspace}
\newcommand{\specialOprod}[2]{\ensuremath{\mathrm{SO}(#1)^{#2}}\xspace}

\makeatletter
\renewenvironment{proof}[1][\proofname]{%
  \par
  \pushQED{\qed}%
  \normalfont
  \topsep6\p@\@plus6\p@\relax
  \trivlist
  \item[\hskip\labelsep\itshape#1\@addpunct{.}]%
  \ignorespaces
}{%
  \popQED\endtrivlist\@endpefalse
}
\makeatother

\title{Structured Sheaf Learning of Consistent \\ Connection Graphs}

\begin{document}
\author{Leonardo Di Nino, ~\IEEEmembership{Student Member,~IEEE,} Gabriele D'Acunto, ~\IEEEmembership{Member,~IEEE,}\smallskip\\ Sergio Barbarossa, ~\IEEEmembership{Life Fellow,~IEEE} and Paolo Di~Lorenzo,~\IEEEmembership{Senior Member,~IEEE}
        \vspace{-.5cm}
        % <-this % stops a space
\thanks{The authors are with the Dept. of Information Engineering, Electronics, and Telecommunications, Sapienza University of Rome, Via Eudossiana, 18, 00185 Rome, Italy. Email: \{leonardo.dinino, gabriele.dacunto, sergio.barbarossa paolo.dilorenzo\}@uniroma1.it. The work was supported by the SNS JU project 6G-GOALS \cite{strinati2024goal} (Horizon 2020 Grant no. 101139232), and by Huawei Technology France SASU under Grant N. Tg20250616041.}}

\maketitle

\begin{abstract}
Connection graphs (CGs) extend classical graphs by associating vector-valued signals to nodes and orthogonal transport maps across edges, making them a natural model for synchronization and manifold-based signal processing. Despite their growing use, learning CGs directly from observations remains challenging because the network topology and the underlying geometric structure are coupled through non-Euclidean orthogonality constraints. In this work, we address this inverse problem by learning a consistent connection graph from noisy vector-valued signals. Exploiting the spectral characterization of consistent CGs, we formulate a structured learning problem that jointly estimates a denoised signal, the graph topology, and node-wise local reference frames. The proposed formulation couples the spectrum of the learned connection Laplacian to that of an underlying combinatorial Laplacian, enabling explicit spectral and topological priors while guaranteeing a nontrivial global-section space. We develop Structured Connection Graph Learning (SCGL), a block-coordinate algorithm that combines closed-form updates, manifold projections, and spectral constraints, and converges to stationary points of the resulting nonconvex problem. Numerical experiments show that SCGL improves topology and geometry recovery over competing approaches, while also yielding effective denoising and signal-compression bases.
\end{abstract}

\begin{keywords}
Graph Learning, graph signal processing, sheaf signal processing, connection graphs. \vspace{-.2cm}
\end{keywords}

\section{Introduction}\label{sec:introduction}

In recent years, Graph Signal Processing (GSP)~\cite{sandryhaila2013discrete} has emerged as a powerful framework for analyzing data supported on irregular domains. 
Its cornerstone is the graph shift operator, typically the graph Laplacian~\cite{smola2003kernels}: under the assumption that similarities are encoded in local neighborhoods, this operator provides a surrogate for distance in domains where conventional metrics are unavailable or uninformative.
Such domain-aware representations stem from the algebraic and spectral properties of the shift operator, which enables diffusion processes on graphs and underpin both classical convolutional methods~\cite{leus2023graph} and graph-based deep learning architectures~\cite{wu2020comprehensive}.
Consequently, learning graphs and their associated shift operators from data is a fundamental problem in GSP \cite{segarra2017network,Sardellitti2019GraphTopology,mateos2019connecting}.

Graph learning is, in general, an ill-posed combinatorial problem and is therefore guided by prior knowledge of properties that the observed signals are expected to exhibit on the inferred system. 
A wide range of methods, spanning signal processing and probabilistic modeling~\cite{mateos2019connecting} as well as deep learning~\cite{Kazi_2023}, have been proposed.
Broadly, they follow two main strategies:
\emph{(i)} enforcing a desired topology by balancing signal sparsity and connectivity;
and \emph{(ii)} promoting signal smoothness, i.e., low variation across connected nodes on the learned graph. The smoothness assumption is widespread in graph learning because it is intimately tied to the behavior of diffusion processes and to the equilibrium states encoded in the spectra of shift operators. 
Conversely, the over-smoothing phenomenon~\cite{rusch2023survey,keriven2022not}, which arises as a consequence of such dynamics, can degrade performance in tasks that require preserving fine-grained details such as long-range interactions or high-frequency patterns---revealing the limits of smoothness when applied beyond its appropriate context.

Beyond graph models, there has been growing interest in higher-order structures that capture dependencies which classical graphs struggle to represent.
Topological signal processing and deep learning have emerged as alternatives to GSP by encoding multi-way dependencies in higher-order combinatorial topologies~\cite{Barbarossa2020Topological,isufi2025topological,pmlr-v235-papamarkou24a}, 
%yet they still cannot capture processes that involve diffusion of heterogeneous structured data. 
yet they offer limited flexibility for modeling and processing heterogeneous, structured data.
In this direction, sheaf-theoretic approaches~\cite{curry2014sheaves,ayzenberg2025sheaf} enrich graph-based representations by assigning structured data to nodes and edges and relating them via structure-preserving maps, yielding a general \emph{functorial} framework that can be specialized to the underlying process or representation task.
By defining local-to-global and global-to-local dynamics across a network, sheaves move beyond standard graph diffusion~\cite{bodnar2023neuralsheafdiffusiontopological} by intrinsically shaping the sheaf Laplacian~\cite{hansen2020laplacians}, which governs processes with richer representational expressivity.
This additional degree of freedom comes at the cost of a more complex spectral behavior~\cite{hansen2019toward}, which now depends not only on the underlying topology but also on the structure-preserving maps.
This shift highlights the need for new coupling principles between signals and systems that ensure a meaningful representation framework and, ideally, enforce desirable spectral properties. 
As a result, learning sheaf Laplacians from data has emerged as a fundamental and largely open challenge~\cite{8683709,10942997}.

Among sheaf-theoretic models, \emph{connection graphs} (CGs) consist of \emph{(i)} nodes and edges associated with inner-product vector spaces and \emph{(ii)} structure-preserving maps given by orthogonal linear transformations.
CGs have gained prominence in manifold learning~\cite{singer2012vector}, synchronization problems~\cite{bandeira2013cheeger}, shape analysis~\cite{sharp2019vector}, Riemannian signal processing~\cite{battiloro2024tangent}, and neural sheaf diffusion~\cite{barbero2022sheaf}. 
Motivated by these applications, we investigate the problem of learning CGs from observed signals.

\spara{Related works.} 
CGs find ubiquitous applications in signal processing and machine learning, supported by solid theoretical contributions that define them as a tool for processing vector-valued data over discrete topologies.
Vector Diffusion Maps (VDM) provide a seminal framework that approximates connection Laplacians through a nonlinear dimensionality reduction method grounded in geometric principles~\cite{singer2012vector}. 
Under suitable assumptions, as the number of samples from the underlying manifold tends to infinity, VDM converges to the Laplace--Beltrami operator on the manifold's tangent bundle. 
The link between the Laplace--Beltrami operator and its discretized counterpart underlies tangent bundle convolutional learning ~\cite{battiloro2024tangent}, which generalizes scalar-field processing over discrete manifolds~\cite{wang2022convolutional} to vector-field processing. 
This framework builds on vector heat diffusion~\cite{sharp2019vector}, parametrized by the connection Laplacian, to implement finite impulse response filters on vector-valued signals defined over the tangent bundle of a manifold. 
By concatenating such filters with nonlinearities, the method yields principled sheaf neural networks, obtained through a discretization of both the underlying continuous process and the geometry.
Similarly, the authors of~\cite{barbero2022sheaf} discretize heat sheaf diffusion to build convolutional neural networks parametrized by a connection Laplacian learned in a preprocessing step, which acts as a form of graph alignment. 
This choice significantly reduces the complexity compared to more general sheaf neural networks~\cite{hansen2020sheaf,barbero2022sheafAttention}, where the Laplacian is learned at runtime.

Despite the widespread adoption of these models, contributions toward a solid spectral theory that mirrors convolutional learning remain relatively sparse. 
While~\cite{hansen2019toward} outlines the program for a spectral theory of cellular sheaves,~\cite{bandeira2013cheeger} provides a tighter result tailored to connection graphs in the form of a Cheeger-like inequality, shown to act as a lower bound for spectral methods that solve the synchronization problem of recovering orthonormal transformations from noisy measurements. 
In~\cite{chung2014ranking}, the authors derive necessary and sufficient conditions for the spectrum of the connection Laplacian to mirror that of the underlying graph, up to a scaling determined by the dimension of the vector spaces over nodes.
This property---known as \emph{consistency}---underpins recent neural architectures~\cite{bamberger2024bundle} that have shown remarkable performance in overcoming inherent limitations of graph machine learning.

The consistency property plays a central role in our framework: we address the problem of learning the connection Laplacian from observed signals and therefore require criteria that bias the learning toward desired topological and geometric structures, ideally allowing direct control over the spectrum of the inferred operator.
In this sense, our work generalizes structured graph learning as defined in~\cite{kumar2020unified}, where the authors propose an algorithm that encodes prior topological assumptions into the learning procedure via spectral constraints while preserving the smoothness assumption.
Although the smooth graph learning paradigm~\cite{dong2016learning} has already been extended to sheaf Laplacians in~\cite{8683709} via semidefinite programming, connection graphs call for ad-hoc learning criteria to recover proper operators.
Indeed, the semidefinite programming method in~\cite{8683709} can recover certain classes of sheaves but not connection graphs, which are inherently tied to the non-Euclidean geometry of the orthogonal manifold and thus cannot be properly retrieved via convex programming. In~\cite{10942997}, we addressed the limits of this framework using Procrustes alignment and binary edge sampling, further motivating the need for ad-hoc criteria in sheaf representation learning. However, a structured learning framework that jointly infers the topology and orthogonal transports of a consistent connection graph from signals, while explicitly controlling the spectrum of the resulting operator, is still missing. This gap motivates the approach developed next.

\spara{Contributions.} In this work, we propose a framework to learning consistent connection Laplacians from noisy vector-valued observations by parameterizing the operator leveraging consistency hypoteses~\cite{chung2014ranking} through a combinatorial Laplacian and node-wise orthogonal frames, jointly recovering topology and local alignments. 
Consistency forces the learned spectrum to match the combinatorial one up to multiplicity and its kernel to contain synchronized global sections, so that well-understood spectral priors on graph Laplacians~\cite{chung1997spectral,kumar2020unified} directly constrain the inferred connection graph.

Our main contributions can be summarized as follows:
\begin{squishenum}
\item[\emph{(i)}] We formulate the joint inference of the connection Laplacian and filtered signals from noisy observations, extending spectral control from graphs to connection graphs via consistency and biasing learning toward desired structures while guaranteeing targeted geometric properties.
\item[\emph{(ii)}] We devise the \textbf{S}tructured \textbf{C}onnection \textbf{G}raph \textbf{L}earning (SCGL) algorithm, which recovers topological and geometric patterns jointly via block descent over the relevant manifolds, with provable convergence.
\item[\emph{(iii)}] We assess SCGL extensively on synthetic and real-world data, showing favorable performance against baselines across a range of graph generation models.
\end{squishenum}
This work extends our preliminary paper in \cite{di2026learning} both methodologically and empirically, with a revised formulation, optimization framework, and broader experimental validation.

\spara{Roadmap.} 
\cref{sec:bg} reviews the fundamentals of cellular sheaf theory, focusing on global sections and their existence in connection graphs.
\cref{sec:formulation} defines the signal model and derives the learning problem from it.
Then, \cref{sec:solution} outlines the algorithmic routines and initialization.
Next, \cref{sec:results} provides numerical results validating the proposed methodology. 
Finally, \cref{sec:conclusion} concludes and outlines future work. 

\spara{Notation.} 
Scalars, vectors, matrices, and block matrices are denoted by plain letters $s$, bold lowercase letters $\mathbf{v}$, bold uppercase letters $\mathbf{M}$, and blackboard bold uppercase letters $\mathbb{B}$, respectively. 
The $i$-th row of $\mathbf{M}$ is denoted by $[\mathbf{M}]_{i,:}$, the $j$-th column by $[\mathbf{M}]_{:,j}$, and the $(i,j)$-th entry by $[\mathbf{M}]_{i,j}$. 
Similarly, for a block matrix $\mathbb{B}$, the symbol $[\mathbb{B}]_{i,:}$ denotes the $i$-th block-row, $[\mathbb{B}]_{:,j}$ the $j$-th block-column, $[\mathbb{B}]_{i,j}$ the $(i,j)$-th block.
The direct sum of vector spaces is denoted by $\bigoplus$, $\mathrm{ker}(\cdot)$ denotes the kernel of a linear operator.
The determinant of $\mathbf{M}$ is $\mathrm{det}(\mathbf{M})$, while $\mathrm{gdet}(\mathbf{M})$ denotes the generalized determinant.
The Kronecker product is denoted by $\otimes$. 
$\mathbb{E}[\cdot]$ and $\mathbb{V}[\cdot]$ denote expectation and variance of random quantities.
The set of orthogonal matrices of size $n$ is $\ort{n}\coloneqq\{\Omatrix \in \reall^{n\times n} \mid \Omatrix^\top=\Omatrix^{-1}\}$; the set of special orthogonal matrices is $\mathrm{SO}(n)\coloneqq \{\Omatrix \in \ort{n} \mid \det(\Omatrix)=1\}$; and the Stiefel manifold is 
$\stiefel{n}{k}\coloneqq\{\mathbf{V} \in \reall^{n\times k} \mid \V^\top \V=\eye{k} \}$
with $k<n$.
\section{Background}\label{sec:bg}

This section introduces network sheaves, connection graphs, and the algebraic tools used throughout the paper.

\subsection{A Primer on Network Sheaves}

A network sheaf enriches a graph with algebraic data describing how signals are structured over nodes and edges and how this structure is preserved as information propagates across the network. Its formal definitions reads as follows  \cite{curry2014sheaves}:
\begin{definition}[Network sheaf in $\mathsf{Vect}$]
Let $\mathcal{G}(\mathcal{V},\mathcal{E})$ be a graph equipped with the incidence relation $\trianglelefteq_\mathcal{G}:\mathcal{V}\rightarrow\mathcal{E}$. A network sheaf $\mathcal{F}:(\mathcal{V}\cup\mathcal{E},\trianglelefteq_\mathcal{G})\rightarrow \mathsf{Vect}$ is specified by:
\begin{enumerate}
    \item a vector space $\mathcal{F}(v)$ for each node $v \in \mathcal{V}$, called a \textit{node stalk};
    \item a vector space $\mathcal{F}(e)$ for each edge $e \in \mathcal{E}$, called an \textit{edge stalk};
    \item a linear map $\mathcal{F}(v\trianglelefteq e)$ for each incidence relation $v\trianglelefteq e$, called a \textit{restriction map}.
\end{enumerate}
\end{definition}
Informally, a network sheaf equips each node with a local space in which its signal is represented and each edge with a space in which neighboring node signals can be compared or combined. The restriction maps specify how a node-level signal is translated into the representation associated with an incident edge. Hence, rather than assuming that all node signals share the same representation and can be directly compared, a sheaf explicitly models the local transformations required to relate information across the network. A collection of node signals is globally compatible when their edge-level representations agree on every edge.

\begin{remark}
Assigning $\mathcal{F}(v) = \mathbb{R}$ for all $v \in \mathcal{V}$, $\mathcal{F}(e) = \mathbb{R}$ for all $e \in \mathcal{E}$, and $\mathcal{F}(v\trianglelefteq e) = 1$ for all $v\trianglelefteq e$ recovers the unweighted graph used in graph signal processing. A network sheaf therefore subsumes this setting as a special case.
\end{remark}

Defining the space of \emph{$0$-cochains} $\mathcal{C}^0 =\bigoplus_{v \in \mathcal{V}} \mathcal{F}(v)$ and the space of \emph{$1$-cochains} $\mathcal{C}^1 =\bigoplus_{e \in \mathcal{E}} \mathcal{F}(e)$, we can introduce the coboundary operator $\delta: \mathcal{C}^0\rightarrow\mathcal{C}^1$. 
The coboundary operator $\delta$ encodes the local-to-global transition induced by a sheaf, describing how signals observed at nodes coherently glue over the interconnecting edges: it reads edge wise as
\begin{equation}
    (\delta\mathbf{x})_e = \mathcal{F}(v \trianglelefteq e)\,\mathbf{x}_v - \mathcal{F}(u \trianglelefteq e)\,\mathbf{x}_u .
\end{equation}
% \begin{definition}[Coboundary map]
% Let $\mathcal{G}(\mathcal{V},\mathcal{E})$ be a graph and $\mathcal{F}:(\mathcal{V}\cup\mathcal{E},\trianglelefteq_{\mathcal{G}})\rightarrow \mathsf{Vect}$ a network sheaf. The coboundary map is the operator $\delta:\mathcal{C}^0\rightarrow\mathcal{C}^1$ acting on $\mathbf{x} \in \mathcal{C}^0$ as

% \end{definition}

The coboundary operator naturally induces a Laplacian on node signals by composition with its adjoint. The resulting sheaf Laplacian generalizes the standard graph Laplacian to signals and relations structured by the sheaf~\cite{hansen2020laplacians}.
\begin{definition}[Sheaf Laplacian]\label{def:Sheaf_laplacian}
Let $\mathcal{G}(\mathcal{V},\mathcal{E})$ be a graph and $\mathcal{F}:(\mathcal{V}\cup\mathcal{E},\trianglelefteq_{\mathcal{G}})\rightarrow \mathsf{Vect}$ a network sheaf. The sheaf Laplacian is the operator $\mathbb{L}_\mathcal{F}:\mathcal{C}^0\rightarrow\mathcal{C}^0$ defined by
\begin{equation}
    \mathbb{L}_\mathcal{F}=\delta^\top\delta ,
\end{equation}
whose blocks satisfy
\begin{align}
    [\mathbb{L}_\mathcal{F}]_{u,u} &= \sum_{u \trianglelefteq e} \mathcal{F}(u \trianglelefteq e)^\top \mathcal{F}(u \trianglelefteq e), \label{eq:Laplacian_block_1}\\
    [\mathbb{L}_\mathcal{F}]_{u,v} &= - \mathcal{F}(u\trianglelefteq e)^\top \mathcal{F}(v\trianglelefteq e). \label{eq:Laplacian_block_2}
\end{align}
\end{definition}
A nontrivial $0$-cochain that achieves global alignment, i.e., $\mathcal{F}(v \trianglelefteq e)\,\mathbf{x}_v = \mathcal{F}(u \trianglelefteq e)\,\mathbf{x}_u$ for all $e \in \mathcal{E}$, is called a \emph{global section}. The space of global sections, denoted by $\mathcal{H}^0\subseteq\mathcal{C}^0$, coincides with the null space of both the coboundary operator and the sheaf Laplacian: $\mathcal{H}^0=\ker(\delta)=\ker(\mathbb{L}_{\mathcal{F}})$. Global sections spaces underpin sheaf-based models of distributed coordination and decision-making~\cite{hansen2019distributed,hansen2021opinion,issaidtackling,ghalkha2026sheafalign,grimaldi2026sheaftheoreticframeworkdistributedmultisite}. Viceversa, inferring a sheaf with guaranteed global sections from data is more challenging than learning a standard graph: while graph topology directly determines key spectral properties with connectivity guaranteeing constant global sections, in a general sheaf the restriction maps also shape the spectrum and may prevent nontrivial global sections even on a connected graph. CGs provide a structured class of sheaves in which the edge maps are orthogonal and, under consistency, the kernel of the associated Laplacian is explicitly characterized and directly linked to the spectrum of the underlying combinatorial graph.

\subsection{Connection Graphs}

A connection graph is a particular network sheaf in which all node and edge stalks are copies of $\mathbb{R}^n$ and the restriction maps associated with an edge $e=(u,v)$ are orthogonal. Since the edge stalk provides a common reference space, one of the two maps can be fixed to the identity without loss of generality; the remaining map is then represented by a single orthogonal transformation $\Omatrix_{uv}$, which aligns the signal coordinates at $v$ with those at $u$. The formal definition is given as follows.

\begin{definition}[Connection graph~\cite{chung2014ranking}]\label{def:connection}
A connection graph is a triple $\connection = \langle G, \reall^n, \ort{n} \rangle$ consisting of
(i) an undirected graph $\graph\coloneqq(\vertexset,\edgeset,\w)$ with vertex set $\vertexset$, edge set $\edgeset$, and edge weights $\w_e>0$ for each $e \in \edgeset$;
(ii) an $n$-dimensional inner-product space $\reall^n$ attached to each node $v \in \vertexset$;
and (iii) a set of orthogonal matrices $\Omatrix_e \in  \ort{n}$, one for each $e \in \edgeset$.
\end{definition}

The algebraic operators used to describe graphs extend naturally to
connection graphs, yielding a single object that encodes both combinatorial
adjacency and geometric structure generalizing the combinatorial graph Laplacian.

\begin{definition}[Connection Laplacian]\label{def:connectionL}
Let $\connection$ be a connection graph with $|\vertexset| = V$ nodes. The connection Laplacian $\connectionL \in \reall^{Vn \times Vn}$ is the block matrix with $[\connectionL]_{i,j}=\zeros$ for each $(i,j) \notin \edgeset$ and, for each edge $(i,j) \in \edgeset$,
\begin{equation}\label{eq:KLapl}
        [\connectionL]_{i,j}  \coloneqq \begin{cases}
            -\w_{ij} \Omatrix_{ij} & i>j,\\[2pt]
            [\connectionL]_{j,i}^\top & i<j,\\[2pt]
            \sum_{j\neq i} \w_{ij}\eye{n} & i=j.
        \end{cases}
    \end{equation}
\end{definition}
The connection Laplacian in~\eqref{eq:KLapl} is a particular instance of the sheaf Laplacian introduced previously in Definition \ref{def:Sheaf_laplacian}.

A global section over a CG requires
\begin{equation}\label{eq:sync}
    \mathbf{x}_i = \mathbf{O}_{ij}\mathbf{x}_j, \quad \forall\,(i,j)\in\mathcal{E},
\end{equation}
which can be understood as a form of \emph{synchronization}~\cite{bandeira2013cheeger} over the orthogonal group: graph consensus modulated by rotation matrices acting as alignment operators (cf. Fig. \ref{fig:CG_consistency}). When $\mathbf{O}_{ij}=\mathbf{I}_n$ for all $(i,j)\in\mathcal{E}$, this reduces to the \emph{trivial bundle}~\cite{steenrod1999topology}, whose global sections coincide with standard graph consensus. CGs therefore model a strictly richer family of equilibrium states than classical graph constructions.

In CGs, the existence of global sections is tied to the spectral structure of the connection Laplacian. Unlike the graph case, where connectivity alone determines the dimension of the Laplacian kernel, the topology of a CG is insufficient to fully characterize its spectral properties or its global-section space. A more informative characterization is obtained through the notion of \emph{consistency}: the orthogonal maps on the edges compose to the identity around every cycle.

\begin{theorem}[Spectral characterization~\cite{chung2014ranking}]\label{th:consistency_G}
Let $\connection$ be a connection graph with $|\vertexset|=V$ nodes and connection Laplacian $\connectionL \in \reall^{Vn \times Vn}$, and let $\mathbf{L}$ be the combinatorial Laplacian of the underlying graph $\graph$. The following are equivalent:
\begin{enumerate}
    \item \label{th:consistency_G_1} $\connection$ is consistent;
    \item \label{th:consistency_G_2} the eigenvalues of $\connectionL$ are those of $\mathbf{L}$, each with multiplicity $n$;
    \item\label{th:consistency_G_3} for each node $i$ in $\graph$ there exists $\Omatrix_i \in \mathrm{SO}(n)$ such that $\Omatrix_{ij} = \Omatrix_i^\top \Omatrix_j$ for every edge $(i,j) \in \edgeset$.
\end{enumerate}
\end{theorem}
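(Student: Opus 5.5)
The plan is to prove the cycle \ref{th:consistency_G_1}$\Rightarrow$\ref{th:consistency_G_3}$\Rightarrow$\ref{th:consistency_G_2}$\Rightarrow$\ref{th:consistency_G_1}. Throughout, I orient each edge so that the transport along a path $i=v_0,v_1,\dots,v_k=j$ is the product $\Omatrix_{v_0v_1}\Omatrix_{v_1v_2}\cdots\Omatrix_{v_{k-1}v_k}$, with $\Omatrix_{ji}=\Omatrix_{ij}^\top$. Consistency means this product equals $\eye{n}$ around every cycle, or equivalently that it depends only on the endpoints. I work one connected component at a time, since all three statements decompose over components: the Laplacians are block diagonal up to permutation, and spectra combine as multisets.

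\emph{\ref{th:consistency_G_1}$\Rightarrow$\ref{th:consistency_G_3}.} In each component I fix a root $r$ and a spanning tree $T$. I define $\Omatrix_r=\eye{n}$ and let $\Omatrix_j^\top$ be the transport from $r$ to $j$ along $T$. This gives $\Omatrix_{ij}=\Omatrix_i^\top\Omatrix_j$ on tree edges by construction. For a non-tree edge $(i,j)$, the tree path $r\to i$, the edge $i\to j$ and the tree path $j\to r$ form a closed walk. Consistency forces its holonomy to be $\eye{n}$, which rearranges to $\Omatrix_{ij}=\Omatrix_i^\top\Omatrix_j$.

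There is a caveat about the group. If some $\Omatrix_{ij}$ has determinant $-1$, then no choice of $\Omatrix_i\in\specialO{n}$ can work, because $\det(\Omatrix_i^\top\Omatrix_j)=1$. The natural statement therefore takes $\Omatrix_i\in\ort{n}$, which is what the construction produces. The $\specialO{n}$ version holds exactly when all edge maps lie in $\specialO{n}$. I will state this explicitly rather than silently change the claim.

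\emph{\ref{th:consistency_G_3}$\Rightarrow$\ref{th:consistency_G_2}.} Let $\bdO=\mathrm{blkdiag}(\Omatrix_1,\dots,\Omatrix_V)$, which is orthogonal. A direct block computation gives
\begin{equation*}
\bdO^\top(\Lapl\otimes\eye{n})\bdO=\connectionL .
\end{equation*}
Off-diagonal blocks become $-\w_{ij}\Omatrix_i^\top\Omatrix_j=-\w_{ij}\Omatrix_{ij}$, and diagonal blocks are unchanged. So $\connectionL$ is orthogonally similar to $\Lapl\otimes\eye{n}$, whose eigenvalues are those of $\Lapl$, each repeated $n$ times. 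Direction \ref{th:consistency_G_3}$\Rightarrow$\ref{th:consistency_G_1} is also immediate, since the holonomy telescopes to $\Omatrix_{v_0}^\top\Omatrix_{v_0}=\eye{n}$, but it is not needed for the cycle.

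\emph{\ref{th:consistency_G_2}$\Rightarrow$\ref{th:consistency_G_1}.} I expect this to be the main step. I will use only the kernel information. For a connected component, $\ker\Lapl$ has dimension $1$, so \ref{th:consistency_G_2} gives $\dim\ker\connectionL=n$ on that component. The quadratic form
\begin{equation*}
\x^\top\connectionL\x=\sum_{(i,j)\in\edgeset}\w_{ij}\|\x_i-\Omatrix_{ij}\x_j\|^2
\end{equation*}
shows that $\ker\connectionL$ is exactly the space of solutions of \eqref{eq:sync}.

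Now consider the evaluation map $\x\mapsto\x_r$ from $\ker\connectionL$ to $\reall^n$. It is injective, because a section is determined by its value at $r$ through transport along paths. Since $\dim\ker\connectionL=n$, it is therefore surjective, so every $\mathbf{y}\in\reall^n$ is $\x_r$ for some global section $\x$. Propagating \eqref{eq:sync} around any closed walk based at $r$ shows that its holonomy $\mathbf{H}$ satisfies $\mathbf{H}\mathbf{y}=\mathbf{y}$ for all $\mathbf{y}$, so $\mathbf{H}=\eye{n}$. Cycles not passing through $r$ are handled by conjugating with a path to $r$.

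The delicate points are the careful form of the quadratic identity under the paper's block conventions in \eqref{eq:KLapl}, and the per-component dimension count in the disconnected case. For the latter, I note that the multiplicity of $0$ in $\connectionL$ is the sum over components of $\dim\ker$, each at most $n$, and that it equals $n$ times the number of components. Hence every component attains $n$.
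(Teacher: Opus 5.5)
The paper does not prove this theorem. It imports it from \cite{chung2014ranking} and uses it only through the factorization \eqref{eq:laplacian_factiorization}. Your cycle \ref{th:consistency_G_1}$\Rightarrow$\ref{th:consistency_G_3}$\Rightarrow$\ref{th:consistency_G_2}$\Rightarrow$\ref{th:consistency_G_1} is the standard route and is sound: spanning-tree gauge fixing, orthogonal similarity with $\mathbf{L}\otimes\mathbf{I}_n$, and the holonomy argument on the space of global sections. Your last step uses only the multiplicity of the eigenvalue $0$. So you actually prove a sharper fact: consistency is equivalent to $\dim\ker\mathbb{L}$ being $n$ times the number of connected components, which is exactly the property the paper relies on afterwards.

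Your caveat on the group is correct, and it genuinely corrects the statement as printed. Definition~\ref{def:connection} allows $\mathbf{O}_e\in\mathrm{O}(n)$. A tree carrying a single reflection is vacuously consistent and satisfies \ref{th:consistency_G_1} and \ref{th:consistency_G_2}, yet it admits no frames in $\mathrm{SO}(n)$. The paper implicitly restricts to $\mathrm{SO}(n)$-valued transports, as its optimization over $\mathrm{SO}(n)^V$ shows.

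There is one slip to fix in \ref{th:consistency_G_1}$\Rightarrow$\ref{th:consistency_G_3}. With your path convention, you declare $\mathbf{O}_j^\top$ to be the transport $\mathbf{O}_{rv_1}\cdots\mathbf{O}_{v_{k-1}j}$. On a tree edge with parent $i$ this gives $\mathbf{O}_j^\top=\mathbf{O}_i^\top\mathbf{O}_{ij}$, i.e.\ $\mathbf{O}_{ij}=\mathbf{O}_i\mathbf{O}_j^\top$ rather than $\mathbf{O}_i^\top\mathbf{O}_j$. This is already wrong at $i=r$ unless $\mathbf{O}_{rj}$ is symmetric. Likewise, the non-tree holonomy $\mathbf{O}_i^\top\mathbf{O}_{ij}\mathbf{O}_j=\mathbf{I}_n$ rearranges to the same wrong relation. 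The fix is to define $\mathbf{O}_j$ itself as that product. Then $\mathbf{O}_j=\mathbf{O}_i\mathbf{O}_{ij}$ on tree edges and $\mathbf{O}_i\mathbf{O}_{ij}\mathbf{O}_j^\top=\mathbf{I}_n$ on non-tree edges, and both give $\mathbf{O}_{ij}=\mathbf{O}_i^\top\mathbf{O}_j$.

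If consistency is assumed only on simple cycles, add one line for the non-tree edge. The closed walk $r\to i\to j\to r$ retraces the common tree prefix $r\to a$. Its holonomy is therefore $\mathbf{P}\mathbf{H}\mathbf{P}^\top$, where $\mathbf{P}$ is the transport along $r\to a$ and $\mathbf{H}$ is the holonomy of a genuine cycle through $(i,j)$.
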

\begin{figure}
    \centering
    \includegraphics[width=0.75\linewidth]{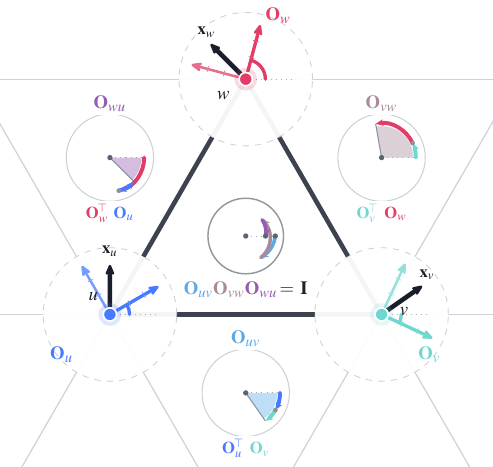}
    \caption{A zoom-in on a cycle of a consistent connection graph. Each node is equipped with a reference frame and restriction maps are structurally defined by their composition: by design, this yields composition to the identity around the cycle. Each node observes a signal for which agreement is reached only relatively to the reference frame.}
    \label{fig:CG_consistency}
\end{figure}
Statements~\ref{th:consistency_G_2} and~\ref{th:consistency_G_3} characterize consistent connection graphs. In particular, the spectrum of $\mathbb{L}$ consists of the eigenvalues of $\mathbf{L}$, each repeated $n$ times. Moreover, any connected graph can be lifted to a consistent CG by assigning orthogonal reference frames to its nodes and defining edge transports through their relative alignments. Under Theorem~\ref{th:consistency_G},
\begin{equation}\label{eq:laplacian_factiorization}
    \mathbb{L} = \bdO^\top(\mathbf{L}\otimes\mathbf{I}_n)\bdO,
\end{equation}
where $\bdO=\mathrm{blkdiag}(\{\mathbf{O}_v\}_{v\in\mathcal{V}})$ and $\mathbf{O}_v\in\mathrm{SO}(n)$. Hence, for a connected graph, $\ker(\mathbb{L})$ has dimension $n$ and contains synchronized signals, which become constant in a common reference frame.

\begin{remark}\label{remark:gauge_invariance}
The system possesses gauge invariance: because of the symmetry structure of the $\mathrm{SO}(n)$ group, the node rotations are identifiable only up to an arbitrary global rotation within each connected component.
\end{remark}

Equation~\eqref{eq:laplacian_factiorization} is a structural key property:
under consistency, a CG is entirely specified by the combinatorial topology
of $G$ and by node-wise reference frames $\{\mathbf{O}_v\}_{v \in \mathcal{V}}$,
whose global sections are precisely the synchronized signals.
This suggests
recovering a CG by jointly estimating topology and local frames from
observations assumed to be synchronizable. In practice, however, noise,
intrinsic variability, and model mismatch push the observations away from
$\ker(\mathbb{L})$, making a hard global-section constraint brittle. We
therefore require only concentration around it, and we model this principle through a probabilistic prior
supported on the low-frequency subspace of the connection Laplacian: the
synchronization requirement becomes a smoothness penalty, and the recovery of
$\mathbb{G}$ a statistical estimation problem, as developed next.

\section{Problem Formulation}\label{sec:formulation}
In this section, we first introduce a model for
vector-valued signals over a consistent CG, and then derive from it a joint
estimation problem in the denoised signals, the graph topology, and the node reference frames.

We consider a collection of possibly noisy measurements encoded in a data matrix $\X = [\x_1, \dots, \x_M]$, associated with an unknown consistent CG $\connection = \langle \graph, \reall^n, \ort{n} \rangle$, where $\graph$ has $V$ nodes. Each column $\x_m \in \reall^{Vn}$ aggregates the local signals $\x_i \in \reall^n$ observed at the nodes $i\in\vertexset$. 
We model such signals through a factor-analysis framework \cite{bartholomew2011latent}, which has been successfully used to represent smooth graph signals using a reduced set of latent variables\cite{dong2016learning}. 
In this
setting, smoothness with respect to $\connectionL$ promotes global sections as a soft prior rather than imposing them as a hard constraint.
Let $\connectionL = \V \bm{\Gamma} \V^\top$ be the eigendecomposition of the connection Laplacian. 
We introduce latent variables $\y \sim \mathcal{N}(\zeros_{Vn},\bm{\Gamma}^\dagger)$ and isotropic, independent Gaussian noise $\e \sim \mathcal{N}(\zeros_{Vn},\sigma^2\eye{Vn})$. Since the prior variance is larger along eigenvectors associated with smaller
eigenvalues, it favors latent signals concentrated on the low-frequency modes of $\connectionL$ and, consequently, quasi-harmonic observations. 
Assuming zero-centered observations, we model the observed signals as
$\x = \V \y + \e$;
therefore, the observed signal is marginally distributed as
\begin{equation}\label{eq:marginal-x}
    \x \sim \mathcal{N}(\mathbf{0}_{Vn}, \connectionL^\dagger + \sigma^2 \eye{Vn}) .
\end{equation}

Starting from~\eqref{eq:marginal-x} and following a maximum-a-posteriori approach for $\y$, as in~\cite{dong2016learning}, we apply the change of variables $\z=\V\y$. 
Treating the CG as unknown, this yields the learning problem below, which jointly estimates the denoised signals  $\Z = [\z_1, \dots, \z_M]$ and the connection Laplacian \connectionL from the noisy measurements $\X$:
\begin{equation}\label{eq:problem_raw}
\begin{aligned}
(\Z^*,\connectionL^*) = \argmin_{\Z,\; \connectionL\in\mathcal{CL}} \, & \;\;\frac{\gamma}{M}\|{\X-\Z}\|_{\mathrm{F}}^2 - \log \mathrm{gdet}(\connectionL)\\
& \;\;+ \mathrm{Tr}\{M^{-1}\Z^\top\Z\connectionL\} ,
\end{aligned}
\tag{P1}
\end{equation}
where $\gamma$ is a constant accounting for the noise variance, and $\mathcal{CL}$ represents the feasible set of Laplacians of consistent CGs, possibly endowed with desired topological properties.

Now, leveraging~\eqref{eq:laplacian_factiorization}, problem~\eqref{eq:problem_raw} can be recast in terms of the combinatorial Laplacian $\Lapl \in \mathcal{GL}$ and the block-diagonal basis $\bdO \in \specialOprod{n}{V}$ as:
\begin{equation}\label{eq:P2}
    \begin{aligned}
    \min_{\Z,\, \Lapl \in \mathcal{GL},\, \bdO \in \specialOprod{n}{V}} \; & 
    \frac{\gamma}{M}\|{\X-\Z}\|_{\mathrm{F}}^2 - \log\mathrm{gdet}\{\bdO^\top(\Lapl \otimes \eye{n})\bdO\} \\
    & + \mathrm{Tr}\{M^{-1}\Z\Z^\top\bdO^\top(\Lapl \otimes \eye{n})\bdO\}, 
    \end{aligned}
    \tag{P2}
\end{equation}
where $\mathcal{GL}$ is the set of valid combinatorial Laplacians. 
\begin{remark}
Although problem~\eqref{eq:problem_raw} is derived from a probabilistic signal model, it can equivalently be read as a classical minimum total-variation formulation, valid in any smooth-learning setting. 
Indeed, smooth graph learning does not require probabilistic modeling: it stems from a low-pass filtering assumption that encompasses~\eqref{eq:marginal-x} as a special case~\cite{kalofolias2016learn}. 
Under this view, problem~\eqref{eq:problem_raw} reduces to a minimum total-variation problem regularized by a log-barrier term controlling the dimension of the Laplacian kernel, which is in turn particularly well-suited to enforcing the spectral priors developed next, as it directly acts on the  structure of the operator of interest.
This interpretation carries over to the consistent connection graph setting in~\eqref{eq:P2}: the total-variation term retains its quadratic-form structure in the Laplacian, and can be equivalently expressed coordinate-wise over the rotated signals $\tilde{\mathbf{Z}}=\mathbb{O}\mathbf{Z}$ as:
\begin{align}
    & \mathrm{Tr}\{M^{-1}\mathbf{Z}\mathbf{Z}^\top\mathbb{O}^\top(\mathbf{L}\otimes\mathbf{I}_n)\mathbb{O}\}
    = M^{-1}\mathrm{Tr}\{\mathbb{O}\mathbf{Z}\mathbf{Z}^\top\mathbb{O}^\top(\mathbf{L}\otimes\mathbf{I}_n)\} \nonumber\\
    &= M^{-1}\mathrm{Tr}\{\tilde{\mathbf{Z}}\tilde{\mathbf{Z}}^\top(\mathbf{L}\otimes\mathbf{I}_n)\} = \sum_{l=1}^{n} M^{-1}\mathrm{Tr}\{\tilde{\mathbf{Z}}_l\tilde{\mathbf{Z}}_l^\top\mathbf{L}\} \label{eq:tv_coordwise}
\end{align}
where $\tilde{\mathbf{Z}}_l$ denotes the denoised signals expressed in the local reference frames and restricted to the $l$-th coordinate at each node.
Similarly, the logarithmic regularization is still encouraging
a non-degenerate spectrum for the Laplacian as it is invariant
under the isometric action of the orthogonal reference frames.
\end{remark}
To simplify the optimization, following~\cite{kumar2020unified}, we introduce a linear operator $\mathcal{L}_\Kronecker$ that maps the edge-weight vector $\w$ to the Kronecker-structured Laplacian $\Lapl\otimes\eye{n}$. This parametrization allows us to optimize directly over $\w$, rather than over $\Lapl\in\mathcal{GL}$.

\begin{definition}[Laplacian operator for the Kronecker graph]\label{def:KLapl}
The linear operator $\mathcal{L}_\Kronecker:\reall^{V(V-1)/2} \rightarrow \reall^{Vn \times Vn}$ is defined block-wise as
\begin{equation}\label{eq:KLaplOp}
[\mathcal{L}_\Kronecker(\w)]_{ij} \coloneqq
\begin{cases}
    -\w_{i+d_j}\, \eye{n} & i>j ,\\[2pt]
    [\mathcal{L}_\Kronecker(\w)]_{ji} & i<j ,\\[2pt]
    -\sum_{j\neq i} [\mathcal{L}_\Kronecker(\w)]_{ij} & i=j ,
\end{cases}
\end{equation}
where $d_j = -j + \tfrac{j-1}{2}(2V-j)$.
\end{definition}

Using~\cref{def:KLapl}, the equivalence in~\eqref{eq:laplacian_factiorization} can be recast as
\begin{equation}\label{eq:consistency_1}
    \connectionL = \bdO^\top(\Lapl \otimes \eye{n})\bdO= \bdO^\top \mathcal{L}_\Kronecker(\w)\,\bdO,
\end{equation}
which can be directly plugged into (\ref{eq:P2}). 

Beyond enforcing consistency, we wish to formulate a learning problem that admits topological priors on the CG. Since the CG topology is tightly coupled to the spectral properties of $\connectionL$, it is natural to introduce spectral priors as constraints~\cite{kumar2020unified}. These are conveniently imposed on the eigenvalues $\bm\Lambda$ of the combinatorial Laplacian, whose spectrum fully determines that of the connection Laplacian under consistency. 
Following~\cite{kumar2020unified}, we use the linear operator
$\mathcal{L}:\reall^{V(V-1)/2}\rightarrow\reall^{V\times V}$, obtained from the connection-Laplacian construction by setting $n=1$, which maps the edge-weight vector $\w$ to the combinatorial Laplacian:
\begin{equation}\label{eq:LaplLin}
[\mathcal{L}(\w)]_{ij} \coloneqq
\begin{cases}
    -\w_{i+d_j}, & i>j,\\[2pt]
    [\mathcal{L}(\w)]_{ji}, & i<j,\\[2pt]
    -\sum_{j\neq i}[\mathcal{L}(\w)]_{ij}, & i=j.
\end{cases}
\end{equation}
Its eigendecomposition 
\begin{equation}\label{eq:consistency_2_reload}
    \mathcal{L}(\w)=\U\bm\Lambda\U^\top,
\end{equation}
provides a non-redundant spectral parametrization on which to enforce the desired topological priors. Let $\mathcal{S}_{\bm\Lambda}$ denote the set of admissible spectra of connected graphs. Since every connected graph Laplacian has a single zero eigenvalue, enforcing $\bm\Lambda\in\mathcal{S}_{\bm\Lambda}$ fixes $\lambda_1=0$. Hence, we optimize only over the remaining $V-1$ eigenvalues and eigenvectors, transfering the log-barrier regularization directly on $\mathbf{\Lambda}$ being the variable of interest on which the regularization work; with a slight abuse of notation, we continue to denote them by $\bm\Lambda$ and $\U$, with $\U\in\stiefel{V}{V-1}$.

Finally, using \eqref{eq:consistency_1}, applying a Lagrangian relaxation of constraint~\eqref{eq:consistency_2_reload}, and adding an edge-sparsity regularizer $\Psi(\w)$, (\ref{eq:P2}) can be recast as the learning problem:
\begin{equation}\label{eq:final-prob}
    \begin{aligned}
        \min_{\Z,\, \bdO,\, \w,\, \U,\, \bm\Lambda}\;&
        \gamma M^{-1}\|\X - \Z\|_{\mathrm{F}}^2 - n \log\det(\bm\Lambda) \\
        & + \mathrm{Tr}\{ M^{-1}\Z\Z^\top \bdO^\top \mathcal{L}_\Kronecker(\w)\,\bdO\}
        + \alpha\, \Psi(\w) \\
        & + \frac{n\beta}{2}
        \|\mathcal{L}(\w) - \U\bm\Lambda\U^\top\|_{\mathrm{F}}^2 ,
    \end{aligned}
    \tag{P3}
\end{equation}
where $\Z \in \reall^{Vn \times M}$, $\bdO \in \specialOprod{n}{V}$, $\w \in \reall_{\geq 0}^{V(V-1)/2}$, $\U \in \stiefel{V}{V-1}$, and $\bm\Lambda \in \mathcal{S}_{\bm\Lambda}$; the strictly positive penalties $\alpha$ and $\beta$ are hyperparameters. Problem~\eqref{eq:final-prob} thus enables us to simultaneously denoise $\X$ and learn the connection Laplacian of a consistent CG with the desired topological properties.

\begin{remark}
In the noise-free setting of our preliminary work~\cite{di2026learning}, one simply sets $\Z = \X$, reducing the problem to maximum-likelihood estimation of the connection Laplacian under the same parametrization.
\end{remark}
\section{Structured Connection Graph Learning}\label{sec:solution}

The objective in~\eqref{eq:final-prob}, together with the Stiefel and product-manifold constraints, makes the problem nonconvex. We solve it efficiently and with guaranteed convergence to stationary points through a Riemannian block-coordinate scheme.

\subsection{Algorithmic Solution}\label{subsec:prob-solution}

We first decouple $\bdO$ from its manifold via the \emph{splitting of orthogonality constraints} (SOC) method~\cite{lai2014splitting}: we relax $\bdO \in \reall^{Vn \times Vn}$ and introduce a new variable $\myP \in \specialOprod{n}{V}$ such that
\begin{equation}\label{eq:P-constraint}
\myP - \bdO = \zeros_{Vn \times Vn} .
\end{equation}
As shown below, this yields closed-form updates for both $\bdO$ and $\myP$. Introducing constraint (\ref{eq:P-constraint}), Problem~\eqref{eq:final-prob} admits the (scaled) augmented Lagrangian:
\begin{equation}\label{eq:aul}
    \begin{aligned}
        \mathscr{L}(\Z, \bdO, \myP, \w, \U, \bm\Lambda) = &
        \;\frac{\gamma}{M} \|\X - \Z\|_{\mathrm{F}}^2 - n \log\det(\bm\Lambda) \\
        &\hspace{-2.5cm} + \mathrm{Tr}\{ M^{-1}\Z\Z^\top \bdO^\top \mathcal{L}_\Kronecker(\w)\bdO\}
        + \alpha\, \Psi(\w) \\
        &\hspace{-2.5cm} + \frac{n\beta}{2}
        \|\mathcal{L}(\w) - \U\bm\Lambda\U^\top\|_{\mathrm{F}}^2
        + \frac{\rho}{2}\|\bdO - \myP + \myB\|_{\mathrm{F}}^2 .
    \end{aligned}
\end{equation}
where $\rho>0$ is the penalty parameter, and $\myB \in \reall^{Vn\times Vn}$ denotes the scaled dual variable associated with (\ref{eq:P-constraint}). We look for saddle points of~\eqref{eq:aul} using the method of multipliers~\cite{boyd2011distributed}, alternating between minimization with respect to the primal blocks and maximization with respect to the dual variable. The resulting iterations consist of the closed-form block updates derived below, where $t$ denotes the iteration index.

\spara{Update for $\Z$.}
From \eqref{eq:aul}, the subproblem in $\Z$ reads as:
\begin{equation}\label{eq:subp-Z}
\min_{\Z} \;\;
    \frac{\gamma}{M}\|\X - \Z\|_{\mathrm{F}}^2 + \mathrm{Tr}\{ M^{-1}\Z\Z^\top \bdO^{t\top} \mathcal{L}_\Kronecker(\w^t)\bdO^t\}. \nonumber
\end{equation}
Since $\bdO^{t\top} \mathcal{L}_\Kronecker(\w^t)\bdO^t$ is positive semidefinite, the objective is convex, and the first-order optimality condition gives the closed form solution:
\begin{equation}\label{eq:update-Z}
    \Z^{t+1}=\big(\gamma\eye{Vn}+\bdO^{t\top} \mathcal{L}_\Kronecker(\w^t)\bdO^t\big)^{-1}\gamma\X .
\end{equation}
Equation~\eqref{eq:update-Z} acts as an infinite-impulse-response (IIR) graph filter~\cite{isufi2024graph} induced by the smoothness prior, which plays the role of a Tikhonov regularizer.

\spara{Update for $\w$.}
From \eqref{eq:aul}, the subproblem in $\w$ reads as:
\begin{equation}\label{eq:subp-w}
    \begin{aligned}
        \min_{\w\ge \mathbf{0}} &\;\; \mathrm{Tr}\{ M^{-1}\Z^{t+1}\Z^{t+1\top} \bdO^{t\top}\mathcal{L}_\Kronecker(\w)\,\bdO^t\} \\
        &\;\; + \frac{n\beta}{2} \|\mathcal{L}(\w) - \U^t \bm\Lambda^t \U^{t\top}\|_{\mathrm{F}}^2
        + \alpha\, \Psi(\w) .
    \end{aligned}
\end{equation}
The objective combines a convex smooth term, with a possibly nonsmooth, nonconvex sparsity-inducing penalty (SIP) $\Psi$. For simplicity, we adopt the concave smooth log-sum penalty $\Psi(\w)=\sum_i\log(\w_i+\epsilon)$~\cite{kumar2020unified}, which induces reweighted $\ell_1$ regularization~\cite{pmlr-v235-papamarkou24a}. Then, we minimize (\ref{eq:subp-w}) via an inexact minorization-maximization (MM) strategy~\cite{7547360}, performing one step of the block successive upper-bound minimization (BSUM) method~\cite{razaviyayn2013unified}. 

Specifically, let $g(\w)$ denote the smooth objective in~\eqref{eq:subp-w}, and let $\tau$ be a Lipschitz constant of $\nabla g$. Then, a strongly convex surrogate for $g(\w)$ satisfying all the conditions of BSUM is:
\begin{equation}\label{eq:surrogate-w}
     g(\w^t)+(\w-\w^t)^\top\nabla_{\w}g(\w^t) + \frac{1}{\tau}\|\w-\w^t\|^2.
\end{equation}
The surrogate in~\eqref{eq:surrogate-w} is a quadratic upper bound of the
objective in \eqref{eq:subp-w} around the current iterate $\w^t$. Its gradient with respect
to $\w$ is obtained by applying the adjoint operators
$\mathcal{L}_\Kronecker^\star$ and $\mathcal{L}^\star$ to the residuals
associated with the Kronecker-structured and combinatorial Laplacians,
respectively \cite{boyd2004convex}. Specifically, define
\begin{equation}
    f_1^t
    \coloneqq
    M^{-1}\bdO^t\Z^{t+1}\Z^{t+1\top}\bdO^{t\top},
\end{equation}
\begin{equation}\label{eq:f_2_surr}
    f_2^t
    \coloneqq
    n\beta\!\left[
    \mathcal{L}(\w^t)
    -\U^t\bm\Lambda^t\U^{t\top}
    \right],
\end{equation}
\begin{equation}
    f_3^t
    \coloneqq
\alpha \nabla_{\w}\Psi(\w^t),
\end{equation}
with $[\nabla_{\w}\Psi(\w^t)]_i=(\w_i^t+\epsilon)^{-1}$. Thus, minimizing the surrogate in~\eqref{eq:surrogate-w} over the nonnegative orthant yields the projected-gradient update
\begin{equation}\label{eq:w_update}
\w^{t+1}
=
\left[
\w^t-\frac{1}{\tau}
\left(
\mathcal{L}_\Kronecker^\star(f_1^t)
+
\mathcal{L}^\star(f_2^t) + f_3^t
\right)
\right]^+,
\end{equation}
where $[\cdot]^+$ denotes the projection onto $\mathbb{R}_+^{V(V-1)/2}$. To implement~\eqref{eq:w_update}, we require explicit expressions for the step-size constant $\tau$ and for the adjoint operators $\mathcal{L}_\Kronecker^\star$ and $\mathcal{L}^\star$. 
The stepsize is determined by the Lipschitz constant of the gradient of the smooth term $g(\mathbf{w})$ in the subproblem in $\mathbf{w}$. As shown in~\cite{kumar2020unified}, the Lipschitz constant for $\mathcal{L}^*\mathcal{L}$ is $2V$, so it follows from \eqref{eq:f_2_surr} that $\tau=2V\beta n$. 
The following lemmas characterize $\mathcal{L}_\Kronecker^\star$, consistently recovering $\mathcal{L}^\star$ as also derived in~\cite{kumar2020unified} for $n=1$:

\begin{lemma}[Adjoint operator]\label{def:AdjKLapl}
Let $\mathbf{Y}\in\mathbb{R}^{Vn\times Vn}$ be partitioned into
$n\times n$ blocks $\mathbf{Y}_{ij}$. The adjoint of
$\mathcal{L}_{\Kronecker}$, defined in~\cref{def:KLapl}, is the operator
$\mathcal{L}_{\Kronecker}^\star:\mathbb{R}^{Vn\times Vn}
\rightarrow\mathbb{R}^{V(V-1)/2}$ given by
\begin{equation}\label{eq:Lapl_star_K}
    \bigl[\mathcal{L}_{\Kronecker}^\star(\mathbf{Y})\bigr]_k
    =
    \operatorname{Tr}\!\left(
        \mathbf{Y}_{ii}+\mathbf{Y}_{jj}
        -\mathbf{Y}_{ij}-\mathbf{Y}_{ji}
    \right),
\end{equation}
where $k = i - j + \frac{(j-1)}{2}(2V - j)$ indexes the edge $(i,j)$.
\end{lemma}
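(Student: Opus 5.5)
The plan is to verify the defining identity of the adjoint with respect to the Frobenius inner product on $\reall^{Vn\times Vn}$ and the Euclidean inner product on $\reall^{V(V-1)/2}$:
\begin{equation*}
    \langle \mathcal{L}_\Kronecker(\w),\mathbf{Y}\rangle_{\mathrm{F}}=\langle \w,\mathcal{L}_\Kronecker^\star(\mathbf{Y})\rangle
\end{equation*}
for all $\w$ and $\mathbf{Y}$. Since $\mathcal{L}_\Kronecker$ is linear, I will write $\mathcal{L}_\Kronecker(\w)=\sum_k \w_k\,\mathbf{A}_k$, where $\mathbf{A}_k\coloneqq\mathcal{L}_\Kronecker(\mathbf{e}_k)$ is the image of the $k$-th canonical vector. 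It then suffices to show that $[\mathcal{L}_\Kronecker^\star(\mathbf{Y})]_k=\langle \mathbf{A}_k,\mathbf{Y}\rangle_{\mathrm{F}}$, because linearity in $\w$ then gives the identity above.

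The first step is to identify $\mathbf{A}_k$ from \cref{def:KLapl}. Fix $k$ and let $(i,j)$, with $i>j$, be the unique pair satisfying $k=i+d_j=i-j+\tfrac{j-1}{2}(2V-j)$; this is the standard column-major enumeration of the strictly lower triangle. By the definition, $\mathbf{A}_k$ has off-diagonal blocks $[\mathbf{A}_k]_{ij}=[\mathbf{A}_k]_{ji}=-\eye{n}$. Its diagonal blocks are $[\mathbf{A}_k]_{ii}=[\mathbf{A}_k]_{jj}=\eye{n}$, since the diagonal block is minus the sum of the off-diagonal blocks in that block-row. All other blocks vanish. Equivalently, $\mathbf{A}_k=(\mathbf{e}_i-\mathbf{e}_j)(\mathbf{e}_i-\mathbf{e}_j)^\top\otimes\eye{n}$.

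The second step computes the Frobenius inner product blockwise: $\langle \mathbf{A}_k,\mathbf{Y}\rangle_{\mathrm{F}}=\sum_{p,q}\mathrm{Tr}([\mathbf{A}_k]_{pq}^\top \mathbf{Y}_{pq})$. Only four blocks contribute, giving $\mathrm{Tr}(\mathbf{Y}_{ii})+\mathrm{Tr}(\mathbf{Y}_{jj})-\mathrm{Tr}(\mathbf{Y}_{ij})-\mathrm{Tr}(\mathbf{Y}_{ji})$. This is exactly \eqref{eq:Lapl_star_K}. Setting $n=1$ recovers the scalar adjoint $\mathcal{L}^\star$ of~\cite{kumar2020unified}, as a consistency check.

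The only real obstacle is bookkeeping. I need to confirm that the index map $(i,j)\mapsto k$ in the statement agrees with the one used in \cref{def:KLapl}, namely $k=i+d_j$ with $d_j=-j+\tfrac{j-1}{2}(2V-j)$, and that it is a bijection onto $\{1,\dots,V(V-1)/2\}$. I will check this by counting: column $j$ contributes $V-j$ entries, and the offset before column $j$ is $\sum_{l<j}(V-l)=\tfrac{(j-1)(2V-j)}{2}$. Finally, I should note that no symmetry assumption on $\mathbf{Y}$ is needed, which is why both $\mathbf{Y}_{ij}$ and $\mathbf{Y}_{ji}$ appear.
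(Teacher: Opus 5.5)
Your proposal is correct and follows essentially the same route as the paper: both verify $\langle \mathcal{L}_{\mathbb{K}}(\mathbf{w}),\mathbf{Y}\rangle=\langle \mathbf{w},\mathcal{L}_{\mathbb{K}}^\star(\mathbf{Y})\rangle$ in the Frobenius inner product. The paper expands the full blockwise sum and regroups the diagonal and off-diagonal contributions edge by edge, while you evaluate the same pairing on the basis images $\mathbf{A}_k=(\mathbf{e}_i-\mathbf{e}_j)(\mathbf{e}_i-\mathbf{e}_j)^\top\otimes\mathbf{I}_n$, and your check that $k=i+d_j$ enumerates the strictly lower triangle bijectively is bookkeeping the paper leaves implicit.
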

\begin{proof}
See Appendix \ref{app:proof_lemma2}.
\end{proof}

\spara{Update for $\bdO$.}
From \eqref{eq:aul}, the subproblem in $\bdO$ reads as:
\begin{equation}\label{eq:subp-O}
\begin{aligned}
    \min_{\bdO}\quad
    & \operatorname{Tr}\!\left\{
        M^{-1}\bdO \Z^{t+1}\Z^{t+1\top}\bdO^\top
        \mathcal{L}_\Kronecker(\w^{t+1})
    \right\} \\
    & {}+ \frac{\rho}{2}
    \left\lVert \bdO-\myP^t+\myB^t \right\rVert_{\mathrm{F}}^2 .
\end{aligned}
\end{equation}
Subproblem~\eqref{eq:subp-O} is strongly convex in $\bdO$. Setting its gradient to zero yields
\begin{equation}\label{eq:closed_form_O}
    \bdO^{t+1}
    =
    \mathrm{vec}^{-1}\!\left[
        \left(2\bm{\Xi}+\rho\eye{V^2n^2}\right)^{-1}
        \rho\,\mathrm{vec}\!\left(\myP^t-\myB^t\right)
    \right],
\end{equation}
where
\[
\bm{\Xi}
=
M^{-1}\Z^{t+1}\Z^{t+1\top}
\otimes
\mathcal{L}_\Kronecker(\w^{t+1}).
\]
The inverse in~\eqref{eq:closed_form_O} is evaluated efficiently through the spectral decomposition described in Appendix~\ref{app:spectral}.

\spara{Update for $\myP$.} From \eqref{eq:aul}, the subproblem in $\myP$ reads as:
\begin{equation}\label{eq:subp-P}
    \min_{\myP \in \specialOprod{n}{V}}\; \frac{\rho}{2} \|\myP - \bdO^{t+1}+ \myB^t\|_{\mathrm{F}}^2.
\end{equation}
Let $\widetilde{\myP}^t=\bdO^{t+1}-\myB^t$. The $\myP$-update is obtained by projecting each diagonal block of $\widetilde{\myP}^t$ onto $\specialO{n}$ using the Kabsch algorithm~\cite{kabsch1976solution}:
\begin{equation}\label{eq:update-P}
    \myP_{ii}^{t+1}
    =
    \mathrm{Retr}\!\left(\widetilde{\myP}_{ii}^t\right),
    \qquad \forall\, i\in[V].
\end{equation}
Given the singular value decomposition $\widetilde{\myP}_{ii}=\mathbf{U}_{ii}\bm\Sigma_{ii}\mathbf{V}_{ii}^\top$, the retraction is $\mathrm{Retr}(\widetilde{\myP}_{ii})=\mathbf{U}_{ii}\widetilde{\bm\Sigma}_{ii}\mathbf{V}_{ii}^\top$ with $\widetilde{\bm\Sigma}_{ii}=\mathrm{diag}(1,\dots,1,\det(\mathbf{U}_{ii}\mathbf{V}_{ii}^\top))$, ensuring $\mathrm{Retr}(\widetilde{\myP}_{ii}^t)\in\specialO{n}$.

\spara{Update for $\U$.}  From \eqref{eq:aul}, the subproblem in $\U$ can be recast as an eigenvalue problem on $\stiefel{V}{V-1}$, i.e.,
\begin{equation}\label{eq:subp-U}
    \max_{\U \in \stiefel{V}{V-1}}\;\operatorname{Tr} \left\{\U^\top\mathcal{L}(\w^{t+1})\U\bm\Lambda^t \right\} ,
\end{equation}
whose solution comprises the $V-1$ eigenvectors associated with the nonzero eigenvalues of $\mathcal{L}(\w^{t+1})$~\cite{absil2009optimization}.

\spara{Update for $\bm\Lambda$.}
With $\w^{t+1}$ and $\U^{t+1}$ fixed, the $\bm\Lambda$-subproblem in~\eqref{eq:aul} reduces to an isotonic regression problem:
\begin{equation}\label{eq:IsoReg}
    \min_{c_1\leq\lambda_2\leq\cdots\leq\lambda_V\leq c_2}
    -n\sum_{i=1}^{V-1}\log(\lambda_{i+1})
    +\frac{n\beta}{2}\sum_{i=1}^{V-1}
    \left([\mathbf{M}]_{ii}-\lambda_{i+1}\right)^2,
\end{equation}
where $\mathbf{M}
=
\U^{t+1\top}\mathcal{L}(\w^{t+1})\U^{t+1}.$ The bounds $c_1$ and $c_2$ constrain the nonzero Laplacian eigenvalues away from degenerate values. Following~\cite{kumar2020unified}, we initialize the isotonic-regression procedure at the unconstrained minimizer, i.e.,
\begin{equation}\label{eq:LAMBDA_KKT}
    \lambda_{i+1}
    =
    \frac{1}{2}\left(
        [\mathbf{M}]_{ii}
        +
        \sqrt{[\mathbf{M}]_{ii}^2+\frac{4}{\beta}}
    \right),
    \qquad \forall\, i\in[V-1].
\end{equation}
The procedure in~\cite{kumar2020unified} then enforces the ordering and box constraints in~\eqref{eq:IsoReg} and converges in at most $V-1$ iterations.

\spara{Update for $\myB$.}
The scaled dual variable is updated by accumulating the residual of the
splitting constraint $\bdO=\myP$:
\begin{equation}\label{eq:update-B}
    \myB^{t+1}
    =
    \myB^t+\bdO^{t+1}-\myP^{t+1}.
\end{equation}

\begin{algorithm}[t]
\caption{Structured Connection Graph Learning (SCGL)}\label{alg:SCGL}
\begin{algorithmic}[1]
\REQUIRE $\X,\alpha, \beta, \w^{0}, \bdO^{0}, \myP^0=\bdO^0, \myB^0=\zeros, \U^{0}, \bm\Lambda^{0}, \Z^0 =\X$
\STATE $t \gets 0$
\WHILE{not converged}
    \STATE Update $\Z^{t+1}$ via~\eqref{eq:update-Z}
    \STATE Update $\w^{t+1}$ via~\eqref{eq:w_update}
    \STATE Update $\bdO^{t+1}$ via~\eqref{eq:closed_form_O}
    \STATE Update $\myP^{t+1}$ via~\eqref{eq:update-P}
    \STATE Update $\myB^{t+1}$ via~\eqref{eq:update-B}
    \STATE Update $\U^{t+1}$ via~\eqref{eq:subp-U}
    \STATE Update $\bm\Lambda^{t+1}$ via the isotonic algorithm~\cite{kumar2020unified} initialized at~\eqref{eq:LAMBDA_KKT}
    \STATE $t \gets t+1$
\ENDWHILE
\ENSURE $\Z^t,\w^{t}, \bdO^{t}$
\end{algorithmic}
\end{algorithm}

\Cref{alg:SCGL} summarizes the proposed Structured Connection Graph Learning (SCGL) procedure. The convergence conditions and the resulting asymptotic convergence to a stationary point of~\eqref{eq:final-prob} are established in the following theorem.

\begin{theorem}[Convergence of SCGL]\label{th:convergence}
Suppose that the spectral constraints are feasible. Then, the sequence
generated by \Cref{alg:SCGL} is bounded, and every one of its accumulation
points satisfies the KKT conditions of~\eqref{eq:final-prob}.
\end{theorem}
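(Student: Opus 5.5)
We follow the standard template for SOC/ADMM-type schemes on nonconvex problems, as in the convergence analyses of~\cite{lai2014splitting} and~\cite{kumar2020unified}, combined with the BSUM framework~\cite{razaviyayn2013unified}. The argument has three stages: boundedness of the iterates, sufficient decrease of the augmented Lagrangian~\eqref{eq:aul} so that successive differences vanish, and passage to the limit in the block optimality conditions to recover the KKT system of~\eqref{eq:final-prob}.

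\emph{Boundedness.} We bound each block separately.
\begin{itemize}
\item $\myP^t\in\specialOprod{n}{V}$ is compact, and $\U^t\in\stiefel{V}{V-1}$ is compact.
\item Feasibility of the spectral constraints with the box $c_1\le\lambda_i\le c_2$ in~\eqref{eq:IsoReg} keeps $\bm\Lambda^t$ in a compact set bounded away from zero. Hence $-n\log\det(\bm\Lambda)$ is finite and smooth along the iterates.
\item Since $\bdO^{t\top}\mathcal{L}_\Kronecker(\w^t)\bdO^t\succeq\zeros$, the filter~\eqref{eq:update-Z} gives $\|\Z^{t+1}\|_{\mathrm{F}}\le\|\X\|_{\mathrm{F}}$.
\item For $\w^t$, the term $\frac{n\beta}{2}\|\mathcal{L}(\w)-\U\bm\Lambda\U^\top\|_{\mathrm{F}}^2$ is coercive in $\w$, because $\mathcal{L}$ is injective and $\U\bm\Lambda\U^\top$ is bounded. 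Every other term is either nonnegative (the trace term) or grows only logarithmically (the log-sum penalty). Once the objective is shown to be monotone, it therefore confines $\w^t$ to a bounded set.
\item For $\bdO$ and $\myB$: $\bdO^{t+1}$ solves a strongly convex problem, and the first-order condition of~\eqref{eq:subp-O} expresses $\rho(\bdO^{t+1}-\myP^t+\myB^t)$ as minus the gradient of the smooth trace term. Combined with~\eqref{eq:update-B}, this identifies $\rho\myB^{t+1}$ with a quantity controlled by $\Z^{t+1}$, $\w^{t+1}$ and $\bdO^{t+1}$, up to a lag of $\myP^{t+1}-\myP^t$. From this we obtain $\|\myB^{t+1}-\myB^t\|\le C(\|\bdO^{t+1}-\bdO^t\|+\|\text{other block differences}\|)$, with $C$ given by the Lipschitz constant of that gradient on the bounded set.
\end{itemize}

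\emph{Sufficient decrease.} Each primal step decreases $\mathscr{L}$ by a quadratic amount.
\begin{itemize}
\item $\Z$ and $\bdO$: the subproblems are strongly convex with exact minimizers, giving decrease of order $\|\Z^{t+1}-\Z^t\|^2$ and $\rho\|\bdO^{t+1}-\bdO^t\|^2$.
\item $\w$: the BSUM surrogate~\eqref{eq:surrogate-w} is a strongly convex upper bound, since $\tau$ dominates the Lipschitz constant of the smooth part and the concave log-sum penalty is majorized by its linearization. This gives decrease of order $\|\w^{t+1}-\w^t\|^2$.
\item $\myP$, $\U$ and $\bm\Lambda$: these are exact global minimizations (Kabsch projection, eigenvectors, isotonic regression), so they do not increase $\mathscr{L}$.
\item $\myB$: the dual step increases $\mathscr{L}$ by $\rho\|\myB^{t+1}-\myB^t\|^2$.
\end{itemize}
Using the bound from the previous stage, the dual increase is dominated by the primal decrease for $\rho$ large enough; this is the standard condition of Lai--Osher/Wang--Yin-type ADMM analyses. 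The augmented Lagrangian is then monotone and bounded below on the bounded set, so it converges. Summing the decrease inequalities yields $\sum_t\|\cdot^{t+1}-\cdot^t\|^2<\infty$ for $\Z,\w,\bdO$, hence also for $\myB$. The constraint residual $\bdO^{t+1}-\myP^{t+1}=\myB^{t+1}-\myB^t\to\zeros$.

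\emph{Main obstacle.} The delicate point is controlling $\myB$. Blocks $\myP,\U,\bm\Lambda$ are updated after $\bdO$, so the dual identity involves $\myP^t$ rather than $\myP^{t+1}$, and $\myP$ carries no proximal decrease. If the direct bound fails, we would either bound $\|\myP^{t+1}-\myP^t\|$ using the nonexpansiveness of the SO($n$) projection near full-rank blocks, or add an explicit proximal term in $\myP$. A second subtlety is the $\U$ and $\bm\Lambda$ updates: they are exact minimizers but not unique, so a quadratic decrease in these blocks is unavailable. For these blocks we would argue stationarity at limit points directly by continuity of the subproblem optimality conditions, rather than through vanishing differences.

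\emph{Limit.} Take a convergent subsequence. Pass to the limit in the first-order conditions of each block:
\begin{itemize}
\item stationarity of the $\Z$ problem;
\item the projected-gradient fixed point $\w=[\w-\tau^{-1}\nabla(\cdot)]^+$, which is equivalent to KKT with multipliers for $\w\ge\zeros$;
\item Riemannian stationarity of $\myP$ on $\specialOprod{n}{V}$ with Lagrange multiplier $\rho\myB$;
\item the eigen-conditions for $\U$ on the Stiefel manifold;
\item the KKT conditions of the isotonic regression~\eqref{eq:IsoReg}.
\end{itemize}
Since $\bdO=\myP$ holds at the limit, the multiplier $\rho\myB$ transfers the $\bdO$-gradient condition into the orthogonal-group stationarity condition. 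The limit point therefore satisfies the KKT conditions of~\eqref{eq:final-prob}.
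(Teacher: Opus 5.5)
Your route differs from the paper's. The paper builds no Lyapunov function. It checks that each block update in \cref{alg:SCGL} is an exact or majorized block minimization of \eqref{eq:aul}, and applies BSUM~\cite{razaviyayn2013unified} with the two-block extension of~\cite[Thm.~7]{kumar2020unified}, which accommodates the non-unique but mutually decoupled $\U$- and $\myP$-updates. It then reads $\bdO^\star=\myP^\star$ off \eqref{eq:update-B}. A sufficient-decrease argument is the natural way to actually control the dual sequence: BSUM concerns a fixed objective, whereas \eqref{eq:aul} moves with $\myB^t$, and the paper asserts $\bdO^\star=\myP^\star$ without deriving the needed estimate. Your bound on $\Z$ and your decrease estimates for $\Z$, $\w$, $\bdO$ are fine. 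However, the argument does not close, and it fails exactly at the step you flag as the main obstacle.

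Combining the optimality condition of \eqref{eq:subp-O} with \eqref{eq:update-B} gives the exact identity
\[
\rho\,\myB^{t+1} = -\tfrac{2}{M}\,\mathcal{L}_\Kronecker(\w^{t+1})\,\bdO^{t+1}\Z^{t+1}\Z^{t+1\top} - \rho\,(\myP^{t+1}-\myP^{t}).
\]
In your bound on $\|\myB^{t+1}-\myB^t\|_{\mathrm{F}}$, the $\myP$-differences therefore enter with coefficient $1$, not $C/\rho$. The dual increase $\rho\|\myB^{t+1}-\myB^t\|_{\mathrm{F}}^2$ then carries the term $\rho\|(\myP^{t+1}-\myP^t)-(\myP^t-\myP^{t-1})\|_{\mathrm{F}}^2$, whose weight grows with $\rho$; only the gradient part is $O(1/\rho)$. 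So taking ``$\rho$ large enough'' cannot make the dual increase dominated, and nothing else pays for it. The $\myP$-step is a projection onto the nonconvex set $\specialOprod{n}{V}$ and gives only non-increase of $\mathscr{L}$, with no decrease proportional to $\|\myP^{t+1}-\myP^t\|_{\mathrm{F}}^2$. Even a projection onto a convex set would give only $\tfrac{\rho}{2}\|\myP^{t+1}-\myP^t\|_{\mathrm{F}}^2$, which is too small against that weight.

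This is why Wang--Yin--Zeng-type analyses require the block updated immediately before the dual step to have a smooth objective, so that $\rho\myB^{t+1}$ is a pure gradient with no lag. In \cref{alg:SCGL} that block is $\myP$, which carries the indicator of $\specialOprod{n}{V}$.

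Your proposed repairs do not fix this.
\begin{itemize}
\item Local nonexpansiveness of the Kabsch projection gives $\|\myP^{t+1}-\myP^t\|_{\mathrm{F}}\le c\,(\|\bdO^{t+1}-\bdO^t\|_{\mathrm{F}}+\|\myB^t-\myB^{t-1}\|_{\mathrm{F}})$ with $c\ge 1$. This reinserts $\rho\|\myB^t-\myB^{t-1}\|_{\mathrm{F}}^2$ with a coefficient no smaller than the one you are trying to bound. It also adds a multiple of $\rho\|\bdO^{t+1}-\bdO^t\|_{\mathrm{F}}^2$ that exceeds the $\tfrac{\rho}{2}\|\bdO^{t+1}-\bdO^t\|_{\mathrm{F}}^2$ decrease of the $\bdO$-step, so nothing telescopes.
\item Adding a proximal term in $\myP$, or updating $\myP$ before $\bdO$ (which removes the lag), changes \cref{alg:SCGL}. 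It would prove a theorem about a different algorithm.
\end{itemize}

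Consequently the following remain unproven: summability of $\|\myB^{t+1}-\myB^t\|_{\mathrm{F}}^2$, the vanishing residual $\bdO^{t+1}-\myP^{t+1}\to\zeros$, the monotonicity of $\mathscr{L}$ you use to bound $\w^t$, and the transfer of the $\bdO$-stationarity condition through the multiplier $\rho\myB$ at limit points. (A minor point: the $\bm\Lambda$-subproblem is strongly convex in $\bm\Lambda$, so that block does yield a quadratic decrease.)
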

\begin{proof}
See Appendix~\ref{app:proofs-convergence}.
\end{proof}

In practice, the algorithm stops when all monitored primal-variable updates and the residual associated with the splitting constraint $\bdO=\myP$ fall below prescribed tolerances.

\spara{Computational complexity.}
The per-iteration cost of \cref{alg:SCGL} is dominated by the $\bdO$ update, that can be implemented with complexity
$O(V^3n^3)$. For $n=1$, SCGL reduces to the classical GSP setting and
recovers the $O(V^3)$ complexity of~\cite{kumar2020unified}. Compared with
the semidefinite-programming formulation in~\cite{8683709}, SCGL also relies
on a more compact parametrization: consistency represents the edge transports
through $V$ node reference frames, so $\bdO$ has $Vn^2$ parameters, rather
than the $V^2n^2$ parameters required to model all edge maps independently.

\subsection{Initialization Strategy}\label{subsec:init-strategy}

Since~\eqref{eq:final-prob} is nonconvex, the performance of the SCGL algorithm depends on the initialization. We therefore adopt the following procedure to obtain an initial point in a favorable region of the feasible set. To this aim, let $\kappa_0=\dim\ker(\Lapl)$ denote the nullity of the underlying combinatorial Laplacian. Under consistency,
$\dim\ker(\connectionL)=n\kappa_0$. Given $M$ noisy observations of $\x$, the population covariance is
\begin{equation}\label{eq:cov-noisy}
    \mathbb{E}\!\left[M^{-1}\X\X^\top\right]
    =
    \connectionL^\dagger+\sigma^2\eye{Vn}.
\end{equation}
Let $\mathbf{C}=M^{-1}\X\X^\top=\mathbf{W}\bm\Delta\mathbf{W}^\top$
be the eigendecomposition of the empirical covariance, with eigenvalues
sorted in ascending order. We estimate $\kappa_0$ using the Akaike
information criterion of~\cite{wax2003detection}, restricting the candidate
nullities of $\connectionL$ to multiples of $n$. Given the resulting estimate
$\hat{\kappa}_0$, the noise variance is estimated by averaging the smallest
$n\hat{\kappa}_0$ eigenvalues of $\mathbf{C}$. We then remove the isotropic
noise component by spectral thresholding,
\[
    [\widetilde{\bm\Delta}]_{ii}
    =
    \max\!\left\{[\bm\Delta]_{ii}-\hat{\sigma}^2,\,0\right\},
    \qquad
    \widetilde{\mathbf{C}}
    =
    \mathbf{W}\widetilde{\bm\Delta}\mathbf{W}^\top.
\]
The initial weights and node reference frames are then obtained as the
least-squares structured approximation of
$\widetilde{\mathbf{C}}^\dagger$:
\begin{equation}\label{eq:pseudoinvL}
    \min_{\substack{
        \w\in\reall_{\geq0}^{V(V-1)/2}\\
        \bdO\in\specialOprod{n}{V}
    }}
    \left\|
        \widetilde{\mathbf{C}}^\dagger
        -
        \bdO^\top\mathcal{L}_\Kronecker(\w)\bdO
    \right\|_{\mathrm{F}}^2.
\end{equation}
We solve~\eqref{eq:pseudoinvL} by Riemannian block-coordinate
descent~\cite{boumal2023introduction}. The remaining variables are
initialized through their corresponding update rules in~\cref{alg:SCGL},
with $\Z^0=\X$, and $\gamma=(2\hat{\sigma}^2)^{-1}$. In summary, SCGL is initialized by fitting a structured least-squares approximation of the pseudoinverse of the denoised empirical covariance.
\section{Numerical Results}\label{sec:results}

We evaluate SCGL on synthetic and real-world data, assessing both its recovery performance and its practical behavior.\footnote{Code available at \url{https://github.com/SPAICOM/consistent-connection-graph-learning}.}

\subsection{Impact of the Hyperparameters}

\begin{figure*}[t!]
    \centering
    \includegraphics[width=0.95\linewidth]{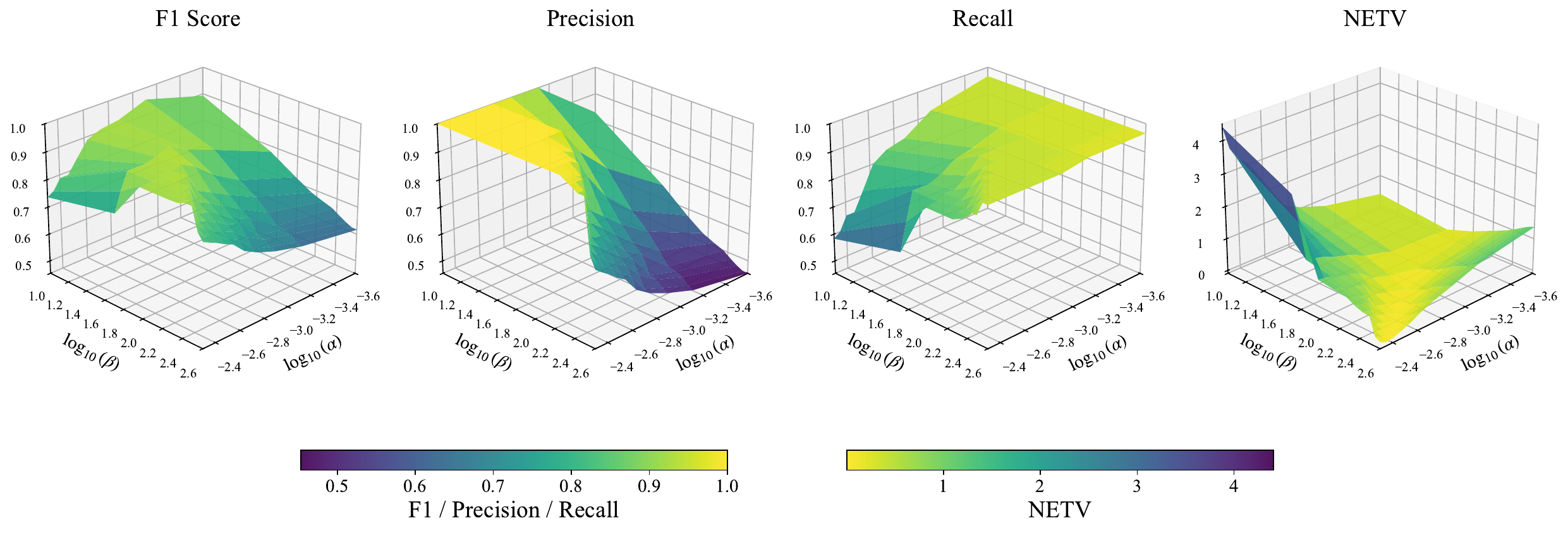}
    \caption{Ablation study on the hyperparameters $\alpha$ and $\beta$.}
    \label{fig:experiment_A}
\end{figure*}

We first study the behavior of SCGL under different hyperparameter settings, as the problem is inherently sensitive to the relative weighting of the sparsity and spectral penalties. We fix a planar $6\times 6$ lattice of $V=36$ vertices, generate a consistent CG over it with stalk dimension $n=3$, and draw $M=1080$ noiseless samples $\x \sim \mathcal{N}(\zeros, \connectionL^\dagger)$. We then test SCGL on $100$ combinations of $(\alpha,\beta)$ over the grid $[0.00025,0.005]\times[10,400]$.

Performance is assessed along two axes. Topological recovery is measured by the F1 score together with precision and recall. Geometric recovery is measured by the empirical total variation over a test set, used as a proxy because direct error evaluation on the learned local frames is ill-posed owing to the symmetry structure of the $\mathrm{SO}(n)$ group (Remark~\ref{remark:gauge_invariance}). The empirical total variation can instead be compared against its theoretical expectation,
\begin{equation}\label{eq:theor-val}
    \mathbb{E}[\mathrm{Tr}\{M^{-1}\X\X^\top\connectionL\}] = \mathrm{Tr}\{\connectionL^\dagger\connectionL\}=Vn-\dim\ker(\connectionL) ,
\end{equation}
with deviations from this value signaling incorrect inference of the geometric structure. Accordingly, the rightmost panel of~\cref{fig:experiment_A} reports the normalized empirical total variation,
\begin{equation}\label{eq:netv}
    \mathrm{NETV} = \frac{\big|\mathrm{Tr}\{\X\X^\top\widehat{\connectionL}\} - \mathbb{E}[\mathrm{Tr}\{\X\X^\top\connectionL\}]\big|}{\mathbb{E}[\mathrm{Tr}\{\X\X^\top\connectionL\}]} .
\end{equation}

\Cref{fig:experiment_A} reveals stable operating regions and a clear trade-off governed by the relative weighting of the sparsity penalty $\alpha$ and the spectral penalty $\beta$, mirroring the tension between recovering topology and preserving geometry. 
Increasing $\beta$ generally improves recall at the expense of precision, as it may introduce false-positive edges. This trade-off arises because enforcing consistency near the ground truth requires the inferred topology to contain, at least approximately, the ground-truth graph as a subgraph. Consequently, for $\beta/\alpha \gg 1$ the NETV approaches zero and nearly all true edges are recovered. Topological recovery can be sharpened further by favoring stronger sparsification while preserving accurate edge detection, driving the F1 score toward its optimum at the cost of exact geometric reconstruction. Guided by this trade-off, the experiments below use parameter settings for which the F1 score and the NETV are jointly near-optimal.

\begin{table*}[h!]
\centering
\caption{Average metrics ($\pm$ 1 SD) for random graph inference.}
\label{table:experiment_2}

\begin{subtable}[h]{0.48\textwidth}
\centering
\scalebox{.9}{
\begin{tabular}{|l|cccc|}
\hline
& & ER Graph & Random Geometric Graph & SBM \\
\hline
\hline
\multicolumn{1}{|l|}{\multirow{4}{*}{$\frac{M}{Vn}=1.5$}}
& SCOP & $0.40 \pm 0.06$ & $0.40 \pm0.04$ & $0.45\pm0.04$ \\
& SDP & $0.23 \pm 0.03$ & $0.26\pm0.05$ & $0.36 \pm 0.09$ \\
& SLGP & $0.40\pm0.09$ & $0.49\pm0.10$ & $0.39 \pm0.08$ \\
& SCGL & $\bf0.54 \pm 0.05$ & $\bf0.62\pm0.07$ & $\bf0.47\pm0.05$ \\
\hline
\multicolumn{1}{|l|}{\multirow{4}{*}{$\frac{M}{Vn}=5$}}
& SCOP & $0.51 \pm 0.05$ & $0.62\pm0.04$ & $0.68\pm0.03$ \\
& SDP & $0.23 \pm 0.03$ & $0.26\pm0.04$ & $0.39\pm0.05$ \\
& SLGP & $0.37\pm0.10$ & $0.59\pm0.07$ & $0.55\pm0.11$ \\
& SCGL & $\bf0.74 \pm 0.08$ & $\bf0.85\pm0.08$ & $\bf0.77\pm0.04$ \\
\hline
\multicolumn{1}{|l|}{\multirow{4}{*}{$\frac{M}{Vn}=15$}}
& SCOP & $0.57 \pm 0.05$ & $0.72\pm0.04$ & $0.75\pm0.02$ \\
& SDP & $0.23 \pm0.03$ & $0.26\pm0.04$ & $0.40\pm0.05$ \\
& SLGP & $0.39\pm0.12$ & $0.55\pm0.16$ & $0.54\pm0.06$ \\
& SCGL & $\bf0.79 \pm 0.07$ & $\bf0.94 \pm0.04$ & $\bf0.82\pm0.04$ \\
\hline
\end{tabular}}
\caption{F1 score ($\uparrow$) for topological recovery.}
\end{subtable}
\hfill
\begin{subtable}[h]{0.48\textwidth}
\centering
\scalebox{.9}{
\begin{tabular}{|l|cccc|}
\hline
& & ER Graph & Random Geometric Graph & SBM \\
\hline
\hline
\multicolumn{1}{|l|}{\multirow{4}{*}{$\frac{M}{Vn}=1.5$}}
& SCOP & $3.70 \pm 0.90$ & $4.70 \pm1.36$ & $3.23\pm0.96$ \\
& SDP & $0.60 \pm 0.55$ & $2.33\pm1.76$ & $0.78 \pm 0.17$ \\
& SLGP & $0.65\pm0.06$ & $0.30\pm0.20$ & $0.44 \pm0.07$ \\
& SCGL & $\bf0.45 \pm 0.16$ & $\bf0.08\pm0.06$ & $\bf0.62\pm0.21$ \\
\hline
\multicolumn{1}{|l|}{\multirow{4}{*}{$\frac{M}{Vn}=5$}}
& SCOP & $0.60 \pm 0.22$ & $0.76\pm0.26$ & $0.47\pm0.13$ \\
& SDP & $0.60 \pm 0.55$ & $2.43\pm1.86$ & $0.77\pm0.17$ \\
& SLGP & $0.64\pm0.07$ & $0.33\pm0.21$ & $0.44 \pm0.08$ \\
& SCGL & $\bf0.03 \pm 0.02$ & $\bf0.03\pm0.03$ & $\bf0.05\pm0.02$ \\
\hline
\multicolumn{1}{|l|}{\multirow{4}{*}{$\frac{M}{Vn}=15$}}
& SCOP & $0.24 \pm 0.12$ & $0.28\pm0.10$ & $0.16\pm0.07$ \\
& SDP & $0.59 \pm0.54$ & $2.40\pm1.83$ & $0.77\pm0.16$ \\
& SLGP & $0.65\pm0.06$ & $0.35\pm0.21$ & $0.43 \pm0.09$ \\
& SCGL & $\bf0.02 \pm 0.01$ & $\bf0.02 \pm0.02$ & $\bf0.02\pm0.01$ \\
\hline
\end{tabular}}
\caption{Normalized empirical total variation ($\downarrow$) for geometric recovery.}
\end{subtable}
\end{table*}

\begin{figure*}[t]
    \centering
    \includegraphics[width=0.85\linewidth]{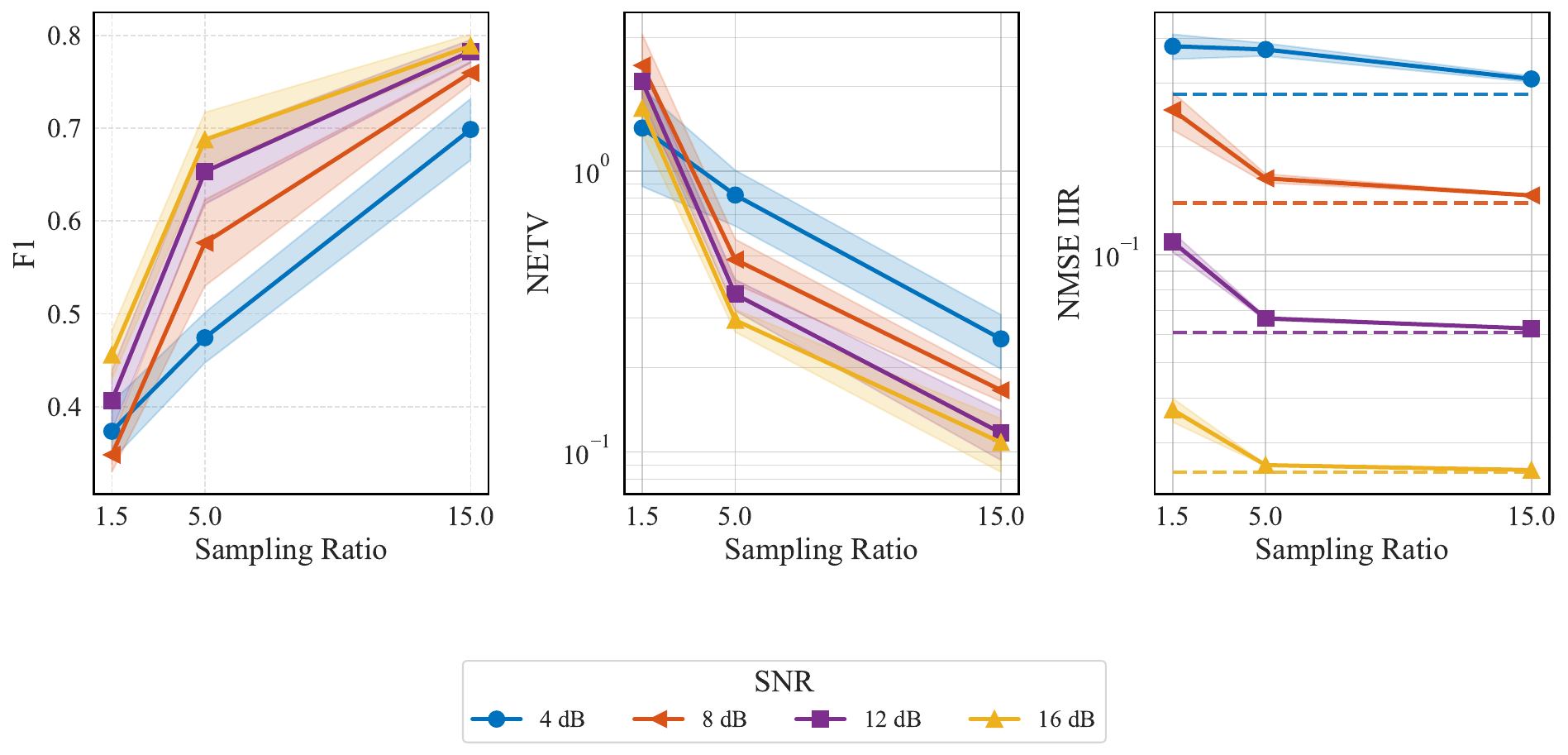}
    \caption{Inference in noisy settings across signal-to-noise ratios: (left) F1 score for topology detection; (center) normalized empirical total variation as a proxy for geometric recovery; (right) NMSE between the noiseless signal and its denoised estimate obtained by the IIR filter parametrized with the inferred connection Laplacian. Dashed lines mark the NMSE attained by the IIR filter parametrized with the ground-truth connection Laplacian and noise variance.}
    \label{fig:experiment_C}
\end{figure*}

\subsection{Random Graph Inference: Noiseless Setting}

We consider graphs with $V = 30$ nodes and stalk dimension $n = 2$ drawn from three random models:
(i) \textbf{Erdős--Rényi (ER)} graphs with edge probability $p = 1.1\,\tfrac{\log V}{V}$ and weights sampled from $\mathrm{Unif}(0.2, 3)$;
(ii) \textbf{random geometric graphs}, obtained by sampling points uniformly in the unit cube and assigning weights through the Gaussian kernel $\mathbf{w}_{ij}=\exp\{-\tfrac{\|\mathbf{x}_{i} -\mathbf{x}_j\|^2}{2\zeta^2}\}\,\mathbf{1}(\|\mathbf{x}_{i}-\mathbf{x}_j\|_2^2\leq 0.375)$ with $\zeta=0.5$;
and (iii) \textbf{stochastic block models (SBM)} with $3$ communities, intra-cluster probability $p_{\mathrm{in}} = 0.7$, inter-cluster probability $p_{\mathrm{out}}=0.05$, community assignment $[0.4,0.3,0.3]$, and unit edge weights. For each graph we build a consistent CG by sampling node bases from $\mathrm{SO}(2)$, then draw $M$ noiseless samples $\x \sim \mathcal{N}(\zeros, \connectionL^\dagger)$ under three data-availability regimes: \emph{low} ($M/Vn = 1.5$), \emph{medium} ($M/Vn = 5$), and \emph{high} ($M/Vn = 15$). Each graph and regime is repeated over $20$ independent simulations.

We compare SCGL against three baselines:
(i) \emph{structured covariance pseudo-inversion} (SCOP), the initialization procedure of~\cref{subsec:init-strategy};
(ii) \emph{smooth learning via SDP} (SDP), the semidefinite-programming approach
of~\cite{8683709} restricted to Laplacians with identity-scaled diagonal
blocks---this enforces smoothness but optimizes over an enlarged convex cone that strictly contains the set of connection Laplacians, and therefore does not respect the CG Riemannian geometry; and (iii) \emph{smooth learning with geometric priors} (SLGP), our earlier method~\cite{10942997}, which solves a minimum-total-variation problem with known edge-set cardinality and edge maps constrained to $\mathrm{O}(n)$. For SCGL we fix $\beta=60$, $\alpha=0.0025$; remaining hyperparameters are chosen by cross-validation, no post-processing is applied to the learned Laplacians, and the number of connected components is supplied as prior knowledge.

We assess the F1 score and the NETV in~\eqref{eq:netv} for topological and geometric recovery, respectively; the results are collected in~\cref{table:experiment_2}. As we can see from \cref{table:experiment_2}, SCGL consistently improves upon SCOP: the latter supplies a sound starting point on which SCGL builds steady gains in both topology and geometry. SCGL is also the method most responsive to refinement of the empirical covariance estimate, as expected from its derivation from the prescribed signal model, which in turn exposes the limitations of SDP and SLGP in learning CGs with prescribed spectral structure. 
The generative model markedly affects performance, with ER and SBM failing for opposite reasons: ER lacks the structural scaffold needed to guide alignment, while SBM's strong community structure pits intra-cluster coherence against inter-cluster alignment. Both effects peak in the low-data regime, and although more data helps, the topology–geometry trade-off stays fragile for ER and SBM. SCGL is strongest on random geometric graphs—in both accuracy and robustness—effectively neutralizing this trade-off.

\subsection{Random Graph Inference: Noisy Setting}

We next assess robustness to measurement noise. We fix a random geometric graph,
construct a consistent CG as above, and generate training signals from the full
probabilistic model in~\eqref{eq:marginal-x}, varying the noise variance to
obtain different signal-to-noise ratios (SNRs). For each sampling ratio--SNR
pair, results are averaged over $10$ realizations; the F1 score and NETV are
evaluated on noiseless test signals $\X_{\mathrm{test}}$. As shown in
\Cref{fig:experiment_C} (left and center), both metrics improve with the
sampling ratio and SNR. Their dependence on the sampling ratio also follows
the trends observed in the noiseless setting, indicating robustness to noisy
training observations.

We further assess the denoising capability of the learned Laplacian. The test
signals are corrupted at the same SNR used during training, yielding
$\widetilde{\X}_{\mathrm{test}}$, and denoised with the IIR filter
$\mathbf{H}_{\widehat{\connectionL}}
=
\gamma\left(\gamma\eye{Vn}+\widehat{\connectionL}\right)^{-1}$. We report
$\mathrm{NMSE}(\mathbf{H}_{\widehat{\connectionL}}
(\widetilde{\X}_{\mathrm{test}}),\X_{\mathrm{test}})$ in
\Cref{fig:experiment_C} (right) as a function of the sampling ratio for
different SNR values. The dashed horizontal line denotes the reference NMSE
obtained by applying the same filter with the ground-truth connection
Laplacian and noise variance.
As we can notice from \Cref{fig:experiment_C} (right), the inferred Laplacian is a strong denoiser: its NMSE nearly matches the theoretical bound even at low SNR, providing further empirical support for the approach.

\subsection{Inference Under Model Mismatch}

In this section, we evaluate the framework under model mismatch in a geometric scenario relevant to spatial statistics. We build a sphere in $\reall^3$ as a $k$-NN graph ($k=4$) over a Fibonacci lattice~\cite{stanley1975fibonacci} with $V=100$ nodes, and equip it with a connection Laplacian approximated by Vector Diffusion Maps (VDM)~\cite{singer2012vector}. This CG is inherently \emph{non}-consistent: the intrinsic curvature of the sphere prevents the edge maps from factorizing, so the spectrum of the connection Laplacian is untied from that of the underlying graph and its kernel is trivial.

We introduce a second source of mismatch through a \emph{random signal model} derived from the Kraichnan model for random vector fields~\cite{kraichnan1970diffusion}. Writing a point as the concatenation of its coordinates $\mathbf{v}=(v_1,v_2,v_3)$, we define a zero-mean Gaussian process on $\reall^3$ with radial covariance $K(\mathbf{v},\mathbf{v}')=\exp\{-\tfrac{\|\mathbf{v}-\mathbf{v}'\|_2^2}{2\zeta^2}\}$, $\zeta=0.5$. Drawing $N=1000$ samples from the spectral density of this covariance and setting the field-variance parameter to $\sigma^2=1$, the signal $\mathbf{f}\in\reall^{3V}$ at a point $\mathbf{v}$ along coordinate $\iota\in\{1,2,3\}$ is synthesized spectrally as
\begin{equation}
    \mathbf{f}_\iota(\mathbf{v})= \sqrt{\tfrac{\sigma^2}{N}}\sum_{j=1}^N\mathbf{p}_\iota(\mathbf{k}_j)\big[Z_{1,j}\cos(\mathbf{k}_j^\top\mathbf{v})+Z_{2,j}\sin(\mathbf{k}_j^\top\mathbf{v})\big] ,
\end{equation}
where $Z_{l,j}\sim\mathcal{N}(0,\sigma^2)$ for $l\in\{1,2\}$ and the spectral samples are projected as $\mathbf{p}_\iota(\mathbf{k}_j)=(\eye{3}-\tfrac{1}{\langle\mathbf{k}_j,\mathbf{k}_j\rangle}\mathbf{k}_j\mathbf{k}_j^\top)\mathbf{e}_\iota$, with $\{\mathbf{e}_{\iota}\}_{\iota=1,2,3}$ the canonical basis of $\reall^3$. The model is implemented with the \texttt{GSTools} library~\cite{muller2022gstools}. To embed the vector-field signals $\mathbf{f}\in\reall^{3V}$ into CG signals $\tilde{\mathbf{f}}\in\reall^{2V}$, we adopt the sampler of~\cite{battiloro2024tangent}: the local signal $\tilde{\mathbf{f}}(i)$ is the projection of $\mathbf{f}(\mathbf{v})$ onto the tangent space at node $i$ defined by the VDM local frames (see~\cite[Sec.~IV-B]{battiloro2024tangent}).

We benchmark SCGL against the baselines on a compression task. From a training set of $2000$ signals we learn connection Laplacians and extract their eigenvectors as an orthogonal basis for the CG signals; we then apply this basis to a test set of $1000$ signals from the same model, performing top-$k$ nonlinear compression~\cite{vetterli2014foundations}, and record the NMSE between the compressed signals $\hat{\mathbf{f}}$ and the originals across sparsity levels. For reference we also include the eigenbasis of the VDM connection Laplacian and that of a trivial-connection graph ($\mathbf{O}_i=\eye{n}$ for all $i\in[V]$), i.e., a classical graph inferred by minimum total variation (KRON).
\begin{figure}[t]
    \centering
    \includegraphics[width=0.95\linewidth]{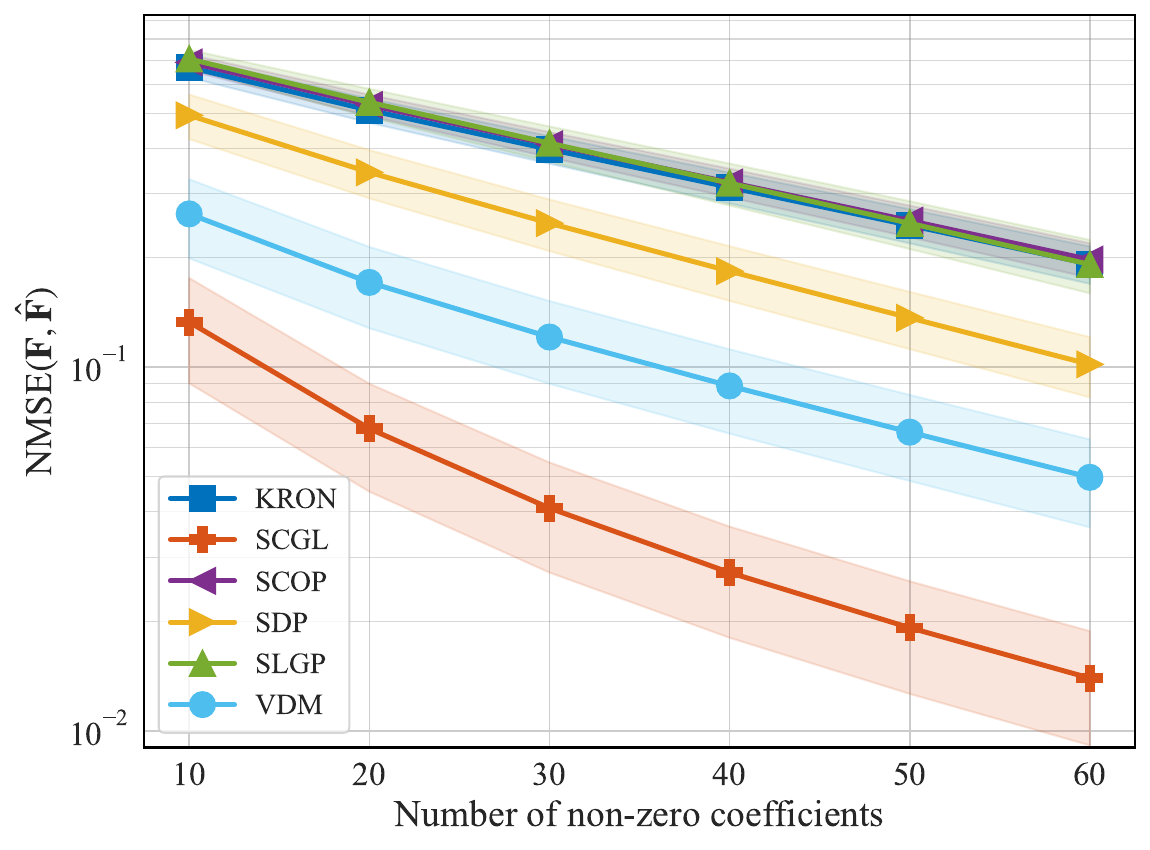}
    \caption{Compression–reconstruction trade-off for random vector fields over a discrete sphere, using the eigenbases of various estimated Laplacian operators.}
    \label{fig:experiment_D}
\end{figure}
The results in~\cref{fig:experiment_D} expose the limitations of classical graph learning and demonstrate the advantage of sheaf-based learning in improving the reconstruction–compression trade-off as sparsity increases. 
The eigenbasis of the connection Laplacian estimated by SCGL achieves the best compression–reconstruction performance among the data-driven Laplacians. 
SCGL improves also with respect to compression over VDM eigenbasis, the ideal operator for discrete vector fields~\cite{battiloro2024tangent}. 
Thus, even under model mismatch SCGL is effective for sheaf signal processing on such domains. 
Crucially, VDM and SCGL differ in a broader perspective: VDM builds its connection Laplacian from the point-cloud coordinates, whereas SCGL relies solely on the observed signals.
VDM's transports are estimated from local neighborhoods and recover the true tangent-bundle connection only asymptotically in the sampling density~\cite{singer2012vector}; under coarse  sampling these estimates might be noisy, and the resulting basis may carry that error into the downstream task. 
SCGL, being signal-adaptive, sidesteps this dependence and can therefore improve over VDM precisely in the sparsely-sampled, despite violating the underlying geometric assumption.
% SCGL is therefore applicable when the latent geometry is unknown, opening a promising avenue toward transferring flat-bundle geometry to more general classes of cellular sheaves.

\subsection{Learning Graphs of RotatedMNIST digits}
We evaluate our algorithm's ability to learn a graph over \texttt{RotatedMNIST}~\cite{larochelle2007empirical}. This is a challenging task, motivated by the prominence of this benchmark in rotational-equivariant learning~\cite{worrall2017harmonic,weiler2018learning}: each image is equipped with an $\mathrm{SO}(2)$ rotation matrix that alters the orientation of its pixels. This property makes the dataset especially well suited to our framework, as we can assign each image to a node of a consistent connection graph with unknown node reference frames. Building on similar experiments~\cite{kalofolias2016learn,wasserman2024graph}, we target the learning of graphs with good clustering properties, and show how the spectral constraint generalizes from connected graphs to $k$-connected-component graphs that separate RotatedMNIST images in an unsupervised way.

To apply our algorithm to the rotated images, we extract from each image a data matrix $\mathbf{F}_i\in\mathbb{R}^{2\times T}$ whose features are equivariant to in-plane rotation, so that rotating the image transforms its features by the corresponding rotation. We realize this feature engineering with steerable filters~\cite{93808}, implemented as Gaussian--Laguerre filters of angular order $h$ and radial index $j$, with $\rho=r/w$ and $w=0.35$ a bandwidth parameter:
$$
\psi_{h,j}(\rho,\theta)\propto\rho^{h}L_{j}^{(h)}(\rho^{2})e^{-\rho^{2}/2}e^{-ih\theta}
$$
where $L_j^{(h)}(x)=\sum_{k=0}^{j}(-1)^k\binom{j+h}{j-k}\frac{x^k}{k!}$.
Cross-correlating the image $\mathcal{I}$ with a filter bank of angular order $h=0,\ldots,H=3$ and radial index $j=1,\ldots,J=8$ gives responses $\hat{c}_{h,j}$. We form $T=HJ^2$ equivariant features given by the stack the conjugate products  $\hat{c}_{h,j_1}\overline{\hat{c}_{h-1,j_2}},h=1,\ldots,H+1, j_1=1,\ldots,J, j_2=1,\ldots,J$ into $\mathbf{F}_i\in\mathbb{R}^{2\times T}$~\cite{worrall2017harmonic}.

Following prior works aiming at discriminating digits ``1'' from ``2'', we sample $10$ instances per class, rotate each by a uniformly random angle, embed them as connection graph signals through the feature pipeline above, and run SCGL with a spectral prior encoding the number of classes.
\begin{figure}[t]
    \centering

    \begin{subfigure}{\linewidth}
        \centering
        \includegraphics[width=\linewidth]{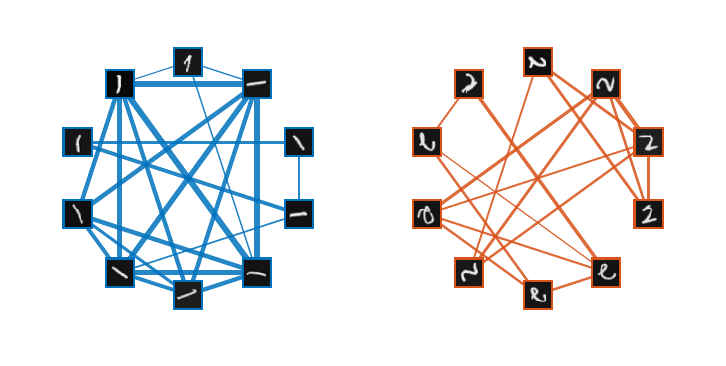}
        \caption{Graph learned over instances of RotatedMNIST, with the prior of 2 connected components.}
        \label{fig:rotatedMNIST}
    \end{subfigure}

    \vspace{0.5em}

    \begin{subfigure}{\linewidth}
        \centering
        \includegraphics[width=\linewidth]{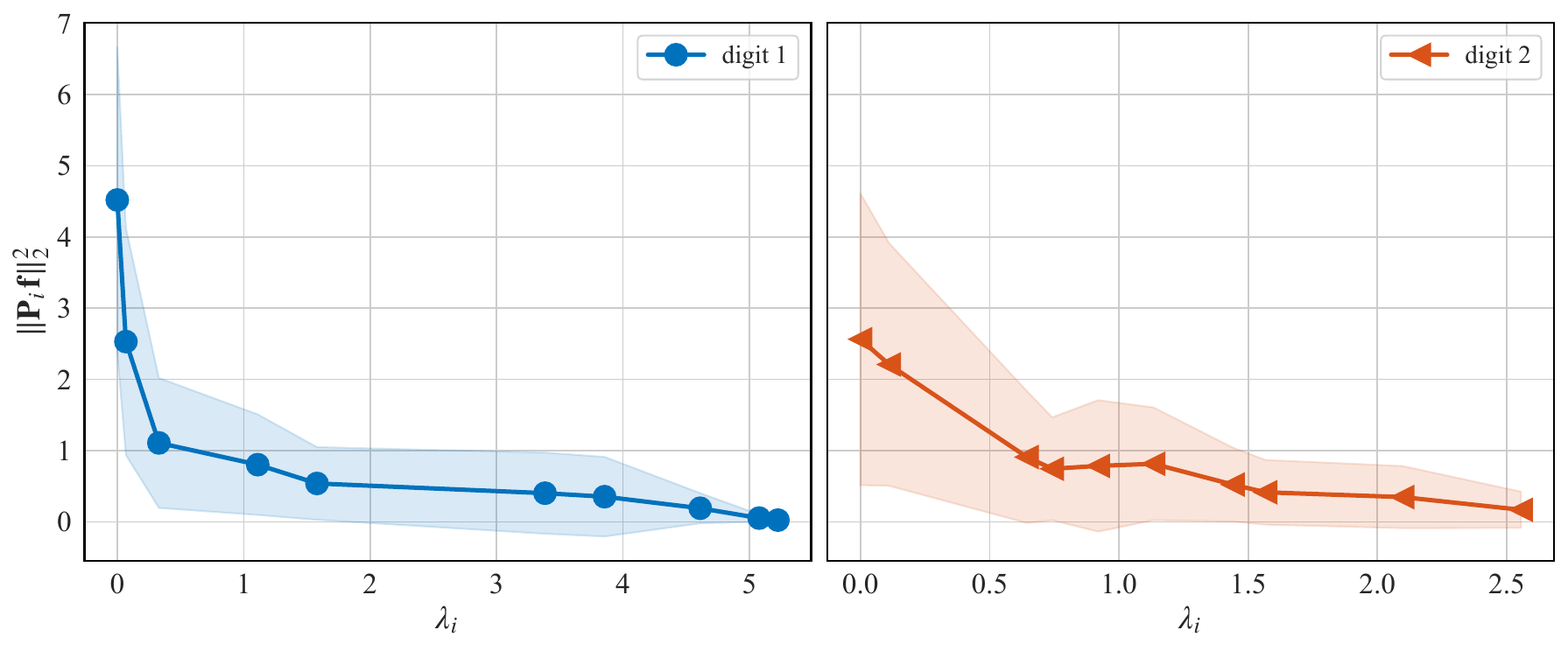}
        \caption{Spectral profile of the MNIST signals over the learned graph.}
        \label{fig:rotatedMNIST_spectrum}
    \end{subfigure}

    \caption{Results on the RotatedMNIST dataset.}
    \label{fig:rotatedMNIST_results}
\end{figure}

\Cref{fig:rotatedMNIST} shows the graphs learned by the SCGL method. It illustrates that combining local geometric structure with a global spectral constraint yields robust representations in unsupervised settings: the learned graph exactly recovers the imposed prior of two connected components, associated with the images of the digits “1” and “2,” respectively. 

Furthermore, \Cref{fig:rotatedMNIST_spectrum}  shows the spectral behavior of the training signals, obtained through the connection Fourier transform (CFT), across the frequencies of the learned graph. Since the connection Laplacian reads as $\mathbb{L}=\mathrm{blkdiag}(\mathbb{L}_1,\mathbb{L}_2)$ with $\mathbb{L}_k=\mathbf{U}_k(\mathbf{\Lambda}_k\otimes\mathbf{I}_2)\mathbf{U}_k^\top,k=1,2,$ each graph frequency spans a two-dimensional eigenspace with orthogonal projector $\mathbf{P}_i=\sum_{j=1}^{2}\mathbf{u}_{i,j}\mathbf{u}_{i,j}^\top$. Thus, for every signal $\mathbf{f}$, the CFT evaluates the behavior of $\|\mathbf{P}_i\mathbf{f}\|_2^2$ across eigenspaces. Interestingly, the CFT profiles in \Cref{fig:rotatedMNIST_spectrum} are markedly smoother for the “1” cluster than for the “2” cluster. 
This difference can be attributed to the greater average dissimilarity among images of the digit ``2,'' which results in a lower concentration of energy at low graph frequencies. These results confirm that SCGL captures rotational variability while recovering the structural, rotation-invariant properties of the signals of interest.
\section{Conclusions}\label{sec:conclusion}

In this paper, we introduced SCGL, a framework for learning consistent connection graphs and their associated connection Laplacians directly from observed signals. 
Connection graphs form a structured class of network sheaves in which edge transports encode geometric relations between local signal spaces; their consistency property provides a tractable link between this geometry and the spectrum of an underlying combinatorial graph. 
Exploiting this structure, SCGL jointly infers the network topology, the node reference frames that define the edge transports, and denoised signal estimates. 
The proposed formulation further transfers spectral constraints from structured graph learning to connection graphs, enabling explicit control of the inferred operator while preserving its geometric consistency. 
Numerical experiments on synthetic and real-world data demonstrate the resulting benefits for both network identification and signal processing.

Several directions remain open. 
First, the probabilistic model could be extended to the broader class of stationary Gaussian graph signals. 
A further direction is to relax the consistency assumption and develop efficient mechanisms for learning or approximating non-consistent connection Laplacians. 
The framework could also be reformulated in a task-driven setting, coupling connection-graph learning with downstream objectives. 
Finally, SCGL may provide a principled basis for designing neural estimators of consistent connection graphs within sheaf-based deep learning architectures.
\balance
\appendix
\subsection{Spectral implementation of the SOC update for $\bdO$}
\label{app:spectral}

A direct evaluation of~\eqref{eq:closed_form_O} requires inverting the
$V^2n^2\times V^2n^2$ matrix
$2\bm{\Xi}+\rho\mathbf{I}$, with complexity
$\mathcal{O}(V^6n^6)$. Exploiting the Kronecker structure of
$\bm{\Xi}$ yields a much cheaper implementation. Let
\begin{align}
    M^{-1}\Z\Z^\top &= \U_{\Z}\bm\Lambda_{\Z}\U_{\Z}^\top, \nonumber\\
    \mathcal{L}_{\Kronecker}(\w)
    &= \U_{\mathcal{L}}\bm\Lambda_{\mathcal{L}}\U_{\mathcal{L}}^\top,
    \label{eq:spectral_decompositions}
\end{align}
and define
\[
\mathbf{R}^t
=
\U_{\mathcal{L}}^\top
(\myP^t-\myB^t)
\U_{\Z},
\qquad
\bm\Phi
=
2\,\bm\lambda_{\mathcal{L}}
\bm\lambda_{\Z}^{\top}
+
\rho\,\mathbf{1}\mathbf{1}^{\top},
\]
where $\bm\lambda_{\mathcal{L}}$ and $\bm\lambda_{\Z}$ collect the eigenvalues
of $\bm\Lambda_{\mathcal{L}}$ and $\bm\Lambda_{\Z}$, respectively.
Using the identity
$\mathrm{vec}(\mathbf{A}\mathbf{X}\mathbf{B})
=
(\mathbf{B}^{\top}\otimes\mathbf{A})
\mathrm{vec}(\mathbf{X})$,
the linear system in~\eqref{eq:closed_form_O} diagonalizes in the spectral
domain, yielding
\begin{equation}\label{eq:spectral_SOC}
    \bdO^{t+1}
    =
    \rho\,
    \U_{\mathcal{L}}
    \left(
        \mathbf{R}^t
        \oslash
        \bm\Phi
    \right)
    \U_{\Z}^{\top},
\end{equation}
where $\oslash$ denotes elementwise division.

The dominant cost is given by the two eigendecompositions in~\eqref{eq:spectral_decompositions},
each requiring $\mathcal{O}(V^3n^3)$ operations; the remaining spectral
filtering and matrix multiplications have lower complexity. Thus, the overall
cost of the update is $\mathcal{O}(V^3n^3)$.

\vspace{-.3cm}
\subsection{Proof of Lemma~\ref{def:AdjKLapl}}
\label{app:proof_lemma2}
We use the Frobenius inner products on
$\mathbb{R}^{Vn\times Vn}$ and $\mathbb{R}^{V(V-1)/2}$.
\begin{equation}
        \begin{aligned}
            &\langle \mathcal{L}_{\mathbb{K}}(\mathbf{w}), \mathbf{Y} \rangle = \\&- \sum_{i>j} \mathrm{Tr}\{\mathbf{w}_{i+d_j}\mathbf{Y}_{ij}\} - \sum_{i<j} \mathrm{Tr}\{\mathbf{w}_{j+d_i}\mathbf{Y}_{ij}\} \\ & + \sum_i \left(\mathrm{Tr}\{ \sum_{j<i} \mathbf{w}_{i+d_j} \mathbf{Y}_{ii}\} + \mathrm{Tr}\{ \sum_{j>i} \mathbf{w}_{j+d_i} \mathbf{Y}_{ii}\} \right) =\\ 
            &= \sum_{i>j} \mathrm{Tr}\{\mathbf{w}_{i+d_j}\left(\mathbf{Y}_{ii} + \mathbf{Y}_{jj} - \mathbf{Y}_{ij} - \mathbf{Y}_{ji}\right)\} = \\
            & =\sum_{i>j} \mathbf{w}_{i+d_j}\mathrm{Tr}{\left(\mathbf{Y}_{ii} + \mathbf{Y}_{jj} - \mathbf{Y}_{ij} - \mathbf{Y}_{ji}\right)}= \\
            &= \langle \mathbf{w}, \mathcal{L}_{\mathbb{K}}^\star(\mathbf{Y}) \rangle \,.
        \end{aligned}
    \end{equation}
where $d_j = -j+\frac{j-1}{2}(2V-j)$ and similarly for $d_i$.
Let $k=i+d_j$, then \Cref{eq:Lapl_star_K} follows.

\vspace{-.2cm}
\subsection{Proof of Theorem~\ref{th:convergence}}
\label{app:proofs-convergence}

We apply the BSUM convergence result of~\cite{razaviyayn2013unified} and the
two-block extension used in the proof of Theorem~7
of~\cite{kumar2020unified}.

\spara{Boundedness:} The spectral constraints bound $\bm\Lambda$, while $\U$ and $\myP$ belong to
the compact manifolds $\mathrm{St}(V,V{-}1)$ and $\mathrm{SO}(n)^V$.
Coercivity of the objective in $(\Z,\w,\bm\Lambda)$ bounds $\Z$ and $\w$;
the closed-form update~\eqref{eq:closed_form_O} then yields a bounded
$\bdO$. Hence, the iterates are bounded and admit accumulation points.

\spara{Convergence:} We apply the result of~\cite{razaviyayn2013unified}, together
with the two-block extension used in the proof of Theorem~7
of~\cite{kumar2020unified}. We verify its conditions by showing that each
block update minimizes either the exact block objective or a valid upper-bound
surrogate that is tight and first-order consistent at the current iterate.
In particular, the $\Z$-update in~\eqref{eq:update-Z} and the
$\bdO$-update in~\eqref{eq:closed_form_O} are closed-form solutions of
strongly convex quadratic subproblems. The $\w$-update in~\eqref{eq:w_update}
minimizes a strongly convex quadratic majorizer followed by projection onto
the nonnegative orthant, while the $\bm\Lambda$-update solves the ordered
spectral subproblem~\eqref{eq:IsoReg}, initialized at~\eqref{eq:LAMBDA_KKT}. The $\U$-update in~\eqref{eq:subp-U} and the $\myP$-update
in~\eqref{eq:update-P} may be non-unique because of the symmetries of
$\mathrm{St}(V,V{-}1)$ and $\mathrm{SO}(n)^V$, respectively. The cited
extension in~\cite{kumar2020unified} permits two non-unique updates when the corresponding blocks are mutually decoupled. This holds here: \eqref{eq:subp-U} depends only on $\mathcal{L}(\w^{t+1})$, whereas \eqref{eq:update-P} depends only on $\bdO^{t+1}-\myB^t$. Therefore, every accumulation point satisfies the
block-wise first-order optimality conditions associated with the augmented
Lagrangian in~\eqref{eq:aul}.

\spara{KKT conditions:} At an accumulation point, these conditions yield stationarity of the
unconstrained $\Z$-block, the projected first-order condition for $\w$, the
KKT conditions of~\eqref{eq:IsoReg} for $\bm\Lambda$, and Riemannian
stationarity of $\U$ and $\myP$. The dual update~\eqref{eq:update-B} enforces
$\bdO^\star=\myP^\star$. Finally,  the linear independence constraint qualification (LICQ) holds for the spectral constraints by
\cite[Lemma~17]{kumar2020unified}, while the Stiefel and special orthogonal
constraints are regular because $\mathrm{St}(V,V{-}1)$ and
$\mathrm{SO}(n)^V$ are smooth embedded manifolds. The resulting
block-wise optimality and feasibility conditions are therefore the KKT
conditions of~\eqref{eq:final-prob}.
\bibliographystyle{IEEEtran}
\bibliography{settings/bibliography}

\end{document}